\documentclass{article}

\usepackage{microtype}
\usepackage{graphicx}
\usepackage{subcaption}
\usepackage{booktabs}
\usepackage{array}

\usepackage{tikz}
\usetikzlibrary{calc}
\usetikzlibrary{graphs.standard}
\usetikzlibrary{arrows.meta}

\usepackage[preprint]{icml2026}

\usepackage{amsmath}
\usepackage{amssymb}
\usepackage{mathtools}
\usepackage{amsthm}
\usepackage{thmtools}

\theoremstyle{plain}
\newtheorem{theorem}{Theorem}[section]

\newtheorem{lemma}[theorem]{Lemma}

\theoremstyle{definition}
\newtheorem{definition}[theorem]{Definition}

\theoremstyle{remark}

\newtheorem{example}[theorem]{Example}

\usepackage{hyperref}

\usepackage[capitalize,noabbrev]{cleveref}

\usepackage{natbib}
\usepackage{xurl}

\newenvironment{noseptemize}
{
	\vspace{-2\topsep}
\begin{itemize}
	\setlength{\itemsep}{1pt}
	\setlength{\parskip}{0pt}
	\setlength{\parsep}{0pt}
}{
	\end{itemize}
	\vspace{-2\topsep}
}

\newcommand{\citeauthoryearpar}[1]{\citeauthor{#1} \citeyearpar{#1}}
\newcommand{\snug}[1]{\hspace*{-4em}#1\hspace*{-4em}}

\newcolumntype{V}{>{$}w{c}{1em}<{$}}
\newcolumntype{S}{>{$}c<{$}}

\newcommand{\Natural}{\mathbb{N}}
\newcommand{\Integers}{\mathbb{Z}}
\newcommand{\Rational}{\mathbb{Q}}
\newcommand{\Real}{\mathbb{R}}
\newcommand{\veca}{\overline{a}}
\newcommand{\vecx}{\overline{x}}
\newcommand{\vecy}{\overline{y}}
\newcommand{\vecz}{\overline{z}}
\newcommand{\vecu}{\overline{u}}
\DeclareMathOperator{\id}{id}

\newcommand{\ingate}[1]{x_{#1}}
\newcommand{\outgate}[1]{y_{#1}}
\newcommand{\neighbourin}{N_{\text{in}}}

\DeclareMathOperator{\size}{size}
\DeclareMathOperator{\depth}{depth}

\newcommand{\family}[1]{({#1}_n)_{n \in \Natural}}
\newcommand{\sd}[3]{\textbf{\upshape FSizeDepth}_{#1}(#2, #3)}
\newcommand{\BigO}{\mathcal{O}}

\newcommand{\FCirc}[1]{\textbf{\upshape\scshape FCirc}_{#1}}
\newcommand{\FNC}[2]{\textbf{\upshape FNC}^{#2}_{#1}}
\newcommand{\FSAC}[2]{\textbf{\upshape FSAC}^{#2}_{#1}}
\newcommand{\FAC}[2]{\textbf{\upshape FAC}^{#2}_{#1}}

\newcommand{\tinput}[1][t]{{#1}_{\textit{in}}}
\newcommand{\toutput}[1][t]{{#1}_{\textit{out}}}
\newcommand{\tconst}[1][t]{{#1}_{\textit{c}}}
\newcommand{\tplus}[1][t]{{#1}_+}
\newcommand{\ttimes}[1][t]{{#1}_\times}

\newcommand{\charfin}[2]{\chi^{#1}_{#2}}
\newcommand{\fpoly}[2]{p^{#1}_{#2}}

\newcommand{\finput}{f^{\text{in}}}
\newcommand{\positional}{f^{\text{pos}}}
\newcommand{\attention}{f^{\text{att}}}
\newcommand{\pooling}{f^{\text{pool}}}
\newcommand{\activation}{f^{\text{act}}}

\newcommand{\dpa}{\textbf{\upshape DPA}_R}
\newcommand{\WS}[1]{\textbf{\upshape WS}_R(#1)}
\newcommand{\WP}[1]{\textbf{\upshape WP}_R(#1)}

\DeclareMathOperator{\softmax}{softmax}
\DeclareMathOperator{\avg}{avg}
\DeclareMathOperator{\hardleft}{left}
\DeclareMathOperator{\hardright}{right}

\DeclareMathOperator{\zero}{zero}
\DeclareMathOperator{\sign}{sign}

\newcommand{\encoding}{E_C}

\newcommand{\Vs}[1]{#1_{s}}
\newcommand{\Vp}[1]{#1_{p}}
\newcommand{\Vi}[1]{#1_{i}}
\newcommand{\Vt}[1]{#1_{t}}
\newcommand{\Vv}[1]{#1_{v}}

\newcommand{\Vone}[1]{#1_{one}}
\newcommand{\Vssq}[1]{#1_{ssq}}
\newcommand{\Visq}[1]{#1_{isq}}

\newcommand{\Vbin}[1]{#1_{b}}

\newcommand{\spit}{s, p, i, t}
\newcommand{\spiton}{s, p, i, t, n, one, ssq}

\newcommand{\xin}{\vecx^{in}}
\newcommand{\xplus}{\vecx^+}
\newcommand{\xtimes}{\vecx^\times}

\newcommand{\inputIdentity}{\id}

\newcommand{\transform}[6][\{0\}]{\textbf{\upshape GT}_{#2} \big(#3,\allowbreak #1,\allowbreak #4,\allowbreak #5,\allowbreak #6 \big)}
\newcommand{\transfunc}[6][\{0\}]{\textbf{\upshape FT}_{#2} \big(#3,\allowbreak #1,\allowbreak #4,\allowbreak #5,\allowbreak #6 \big)}

\newcommand{\mult}{\times}

\icmltitlerunning{Average Attention Transformers and Arithmetic Circuits}

\begin{document}

\twocolumn[
	\icmltitle{Average Attention Transformers and Arithmetic Circuits}

	\icmlsetsymbol{equal}{*}

	\begin{icmlauthorlist}
		\icmlauthor{Lena Ehrmuth}{thi}
		\icmlauthor{Laura Strieker}{thi}
	\end{icmlauthorlist}

	\icmlaffiliation{thi}{
		Institute for Theoretical Computer Science, 
		Leibniz University Hanover,
		Hanover, Germany
	}

	\icmlcorrespondingauthor{Lena Ehrmuth}{ehrmuth@thi.uni-hannover.de}
	\icmlcorrespondingauthor{Laura Strieker}{strieker@thi.uni-hannover.de}

	\icmlkeywords{Transformer, Circuit, Circuit Complexity}

	\vskip 0.3in
]

\begin{NoHyper}
	\printAffiliationsAndNotice{}
\end{NoHyper}

\begin{abstract}
	We analyse the computational power of transformer encoders as sequence-to-sequence functions on vectors. 
	We show that average hard attention can be used to simulate arithmetic circuits if they are given as an input to an encoder. 
	The circuit families that can be simulated this way have constant depth while using 
	unbounded addition, binary multiplication and $\sign$ gates. 
	The transformers we use have arithmetic circuits instead of feed-forward networks. 
	With typical average attention the functions they compute are also computed by the same class of circuit families. 
	Our results hold for transformers over the reals, rationals and any ring in between the two. 
\end{abstract}

\section{Introduction}

Transformers \citep{attention_is_all_you_need} have recently gained popularity in machine learning applications as well as theoretical research. 
The analysis of their expressivity has mostly been focused on what decision problems can be solved by transformers. 
In that regard, Boolean circuits have been a promising model for comparison. 
For example, 
\citeauthoryearpar{hao-etal-2022-formal} showed that leftmost hard attention transformers over $\Rational$ 
can only recognize languages in $\textbf{\upshape AC}^0$, 
	the class of functions computed by Boolean circuit families of constant depth and polynomial size with unbounded fan-in. 
\citeauthoryearpar{strobl2023} showed that average hard attention transformers using floats
can only recognize languages in logspace-uniform $\textbf{\upshape TC}^0$ and 
\citeauthoryearpar{merrill2023} showed the same result for soft attention transformers with logarithmic precision. 
There are also a number of descriptive characterizations comparing transformers to logics. 
\citeauthoryearpar{Chiang2023} defined an extension of first-order logic 
with modular predicates on positions and counting quantifiers and 
showed that its definable languages are between the languages recognized by 
soft attention transformers with fixed and unlimited precision. 
\citeauthoryearpar{Barcelo2024} showed that all languages definable in 
first-order logic with arbitrary unary predicates can be recognized by 
leftmost hard attention transformers and 
even when extending this logic with counting terms, 
the definable languages are recognized by average hard attention transformers. 
Whereas \citeauthoryearpar{Yang2024} showed that transformers 
using left- and rightmost 
attention with strict masking and no positional embedding 
recognize exactly the languages definable in linear temporal logic. 

All of those results are limited to decision problems over finite alphabets. 
We want to analyse the expressivity of transformers as functions 
from vector sequences to vector sequences and compare them to arithmetic circuits. 
These circuits take numbers as inputs and their gates compute functions like addition, multiplication 
and the $\sign$ function. 
This allows us to reason precisely about the theoretical transformer model, which uses real numbers and 
also about practical implementations using rational numbers for approximation. 
We adapt a model of generalized transformers by \citeauthoryearpar{hao-etal-2022-formal} and 
use it to define a class of average dot-product attention transformers 
which use arithmetic circuits instead of feed-forward networks. 

\paragraph{Main Results}
The main contribution of this paper is a correspondence between average attention circuit transformers and the function class $\FSAC{R}{0}[\sign]$. 
All functions computed by the transformers are in $\FSAC{R}{0}[\sign]$ and 
all circuits from an $\FSAC{R}{0}[\sign]$ family can be simulated by a single such transformer. 

Our results give a new perspective on the expressivity of transformers and 
we think that the field of machine learning would benefit from advances in the theory of algebraic computation and vice versa. 
We would also like to emphasize that theoretical results have the potential to inspire practical development. 

\paragraph{Organization}
In the next section, we define arithmetic circuits and generalized transformers over semirings. 
In \cref{sec:simulating_transformers}, we briefly describe an upper bound for the circuit complexity of transformers. 
In \cref{sec:simulating_circuits}, we show how average attention circuit transformers can be used to simulate circuits. 
This requires a rather intricate construction presented incrementally in multiple subsections. 
The full proofs of most of the theorems are deferred to the Appendix. 

\section{Preliminaries}
\label{sec:preliminaries}

We define $\Natural = \{0, 1, 2, \dots\}$. 
For a set $S$, we use $S^*$ for the Kleene closure of $S$. 
We use lowercase letters to denote numbers, overlines for vectors 
and uppercase letters for sequences of vectors. 
The length of a vector or sequence is denoted with $\vert\vecx\vert$ or $\vert X \vert$. 
For $d \in \Natural$, the components of a vector $\vecx \in S^d$ are $x_1, \dots, x_d$ 
and the vectors of a sequence $X \in (S^d)^*$ are $X_1, \dots, X_{\vert X \vert}$. 

\subsection{Algebra}

We will discuss arithmetic circuits and transformers using values from arbitrary semirings. 
\begin{definition}
	A \emph{semiring} is a set $R$ equipped with binary functions $+$ and $\mult$ such that 
	\begin{noseptemize}
		\item $(R, +)$ is a commutative monoid with identity $0$, 
		\item $(R, \mult)$ is a monoid with identity $1$, 
		\item $0$ is absorbing with respect to $\mult$ and
		\item $\mult$ distributes over $+$. 
	\end{noseptemize}
	A semiring is called \emph{commutative} if $(R, \mult)$ is a commutative monoid. 
	A \emph{ring} is a semiring such that $(R, +)$ is an abelian group. 
	A \emph{field} is a ring such that $(R, \mult)$ is an abelian group. 
\end{definition}
We can perform addition and multiplication over any semiring. 
A ring also allows subtraction and over a field we can additionally use division. 
For example, 
$\Natural$ is a semiring but not a ring, 
$\Integers$ is a ring but not a field and 
$\Rational, \Real$ are fields. 

If we want to simulate the computation of one semiring within another, 
we need what we call an extension. 
\begin{definition}
	Let $R$ and $T$ be semirings with functions $+_R, \mult_R$ and $+_T, \mult_T$ 
	having identities $0_R, 1_R$ and $0_T, 1_T$. 
	Then a function $f \colon R \to T$ is a \emph{homomorphism} if we get 
	\begin{align*}
		f(x +_R y) &= f(x) +_T f(y), \\
		f(x \mult_R y) &= f(x) \mult_T f(y), \\ \text{and }
		f(1_R) &= 1_T 
	\end{align*}
	for all $x, y \in R$. 
	We call $T$ a \emph{semiring extension} of $R$ if there is an injective homomorphism from $R$ to $T$. 
	If $T$ is also a ring/field, we call $T$ a \emph{ring/field extension} of $R$. 
\end{definition}
For example,
$\Natural$ is a semiring extension of $\Natural$, 
$\Integers$ is a ring extension of $\Natural$ and 
$\Rational$ is a field extension of $\Integers$. 

\begin{definition}
	An ordered semiring is a semiring $R$ with a partial order $\mathord\le \subseteq R^2$ 
	such that for all $x, y, z \in R$ we get 
	\begin{align*}
		x \le y \:&\Rightarrow\: x + z \le y + z \quad\text{and} \\
		x \le y \text{ and } 0 \le z \:&\Rightarrow\: x \mult z \le y \mult z \text{ and } z \mult x \le z \mult y .
	\end{align*}
	We call a ring extension of $\Integers$ $\Integers$-ordered, 
	if it preserves the usual order of $\Integers$, which means that for all $x, y \in \Integers$ 
	\[f(x) \le f(y) \quad\Leftrightarrow\quad x \le y.\]
\end{definition}
We will need an order on our values to define hard attention. 

\subsection{Arithmetic Circuits}
\label{sec:arithmetic_circuits}

We compare transformers to the well established theoretical model of circuits. 
For a more thorough introduction to this topic, see \citeauthoryearpar{Savage1997} or \citeauthoryearpar{Vollmer1999}. 
A circuit is defined by an acyclic graph labeled with functions over some set. 
In this paper, this will be a semiring with its operations $+$ and $\mult$. 

\begin{definition}
	A \emph{circuit basis} is a pair $(S, B)$ where $S$ is a set or domain and $B$ is a set of 
	functions $f \colon S^k \to S$ with $k \in \Natural$ and 
	functions on sequences $f \colon S^* \to S$. 
\end{definition}

\begin{definition}
	A \emph{circuit} $C$ over a basis $(S, B)$ with $m$ inputs and $n$ outputs 
	is a tuple $C = (V, E, \alpha, \beta)$ where 
	\begin{noseptemize}
		\item $(V, E)$ is a finite directed acyclic graph, 
			the nodes of which are called the gates of $C$, 
		\item $\beta \colon V \to S \cup B \cup \{\ingate{1}, \dots, \ingate{m}, \outgate{1}, \dots, \outgate{n}\}$ 
			is a labeling function that assigns each element of $\{\ingate{1}, \dots, \ingate{m}, \outgate{1}, \dots, \outgate{n}\}$ exactly once and
		\item $\alpha \colon E \to \Natural$ is a function such that for each node $v \in V$, 
			$\alpha(\neighbourin(v)) = \{1, \dots, \vert \neighbourin(v) \vert\}$ where 
			$\neighbourin(v) = \{u \mid (u, v) \in E\}$ is the in-neighbourhood of $v$. 
	\end{noseptemize}
	We call the in- and out-degree of gates their fan-in and fan-out. 
	The function $\beta$ determines the type of each gate 
	and thereby imposes some conditions on the graph: 
	\begin{noseptemize}
		\item A gate $v$ with $\beta(v) \in S$ is called a constant gate and has fan-in $0$. 
		\item A gate $v$ with $\beta(v) \in \{\ingate{1}, \dots, \ingate{m}\}$ is called an input gate and has fan-in $0$. 
		\item A gate $v$ with $\beta(v) \in \{\outgate{1}, \dots, \outgate{n}\}$ is called an output gate, 
			has fan-in $1$ and fan-out $0$.
		\item A gate $v$ with $\beta(v) \in B$ is called a function gate and has no restrictions.  
	\end{noseptemize}
	The \emph{size} of a circuit is the number of gates it has and 
	the \emph{depth} of a circuit is the length of the longest path from an input to an output gate. 
\end{definition}

\begin{definition}
	A circuit $C = (V, E, \alpha, \beta)$ over $(S, B)$ with $m$ inputs and $n$ outputs 
	computes a function $f_C \colon S^m \to S^n$ defined as follows. 
	Each gate $v \in V$ computes a function $f_{C,v} \colon S^m \to S$ inductively defined as
	\begin{equation*}\vecx \mapsto 
		\begin{cases}
			c		&\text{if } \beta(v) = c \in S \\
			x_i	&\text{if } \beta(v) = \ingate{i} \\
			f_{C,u}(\vecx) &
			\begin{multlined}
				\text{if } \beta(v) \in \{\outgate{1}, \dots, \outgate{n}\} \\[-11pt]
				\text{ and } \neighbourin(v) = \{u\} 
			\end{multlined}
				\\[2pt]
			\begin{multlined}
				b(f_{C,u_1}(\vecx), \dots \\[-11pt] \dots, f_{C,u_k}(\vecx)) \\
			\end{multlined} &
			\begin{multlined}
				\text{if } \beta(v) = b \in B \text{ and} \\[-11pt] \hspace{1pt}
				\neighbourin(v) = \{u_1, \dots, u_k\} \\ \text{ s.t. }
				\alpha(u_1) < \dots < \alpha(u_k)
			\end{multlined}
		\end{cases}
	\end{equation*}
	and $f_C(\vecx) = (f_{C, \beta^{-1} (\outgate{1})}, \dots, f_{C, \beta^{-1} (\outgate{n})})$. 
\end{definition}
A single circuit takes an input vector whose dimension is equal to the number of input gates. 
To handle inputs sequences of any length, we need a family of circuits. 

\begin{definition}
	A \emph{circuit family} $C = \family{C}$ over $(S, B)$ is a sequence of circuits over $(S, B)$ 
	such that each circuit $C_n$ has exactly $n$ inputs. 
	It computes a function \[f_C \colon S^* \to S^*, \vecx \mapsto f_{C_{\vert\vecx\vert}} (\vecx).\] 
	
	We define $\sd{(S, B)}{f_s}{f_d}$ as the set of functions computed by circuit families over $(S,B)$ 
	that have $\size(C_n) \in \BigO(f_s(n))$ and $\depth(C_n) \in \BigO(f_d(n))$. 
\end{definition}

If the basis is a semiring with addition and multiplication functions, 
the corresponding circuits are called arithmetic. 
We define three different complexity classes based on the fan-in of addition and multiplication gates. 
\begin{definition}
	Let $R$ be any semiring with operations $+$ and $\mult$. 
	We define $+_* \colon R^* \to R, \vecx \mapsto \sum_{i=1}^{\vert\vecx\vert} x_i$ and 
	$\mult_* \colon R^* \to R, \vecx \mapsto \prod_{i=1}^{\vert\vecx\vert} x_i$. 
	Then a bounded fan-in (arithmetic) circuit over $R$ is a circuit over the basis $(R, \{+, \mult\})$, 
	a semi-unbounded fan-in circuit over $R$ uses the basis $(R, \{+_*, \mult\})$ and 
	an unbounded fan-in circuit over $R$ uses the basis $(R, \{+_*, \mult_*\})$. 
	We use $\FCirc{R}$ for functions computed by an unbounded fan-in circuit over $R$. 
	These always have fixed size input and output vectors. 
	To analyse functions of sequences, we define complexity classes for the three kinds of circuit families. 
	\begin{align*}
		\FNC{R}{0} &= \sd{(R, \{+, \mult\})}{n^{\BigO(1)}}{1} \\
		\FSAC{R}{0} &= \sd{(R, \{+_*, \mult\})}{n^{\BigO(1)}}{1} \\
		\FAC{R}{0} &= \sd{(R, \{+_*, \mult_*\})}{n^{\BigO(1)}}{1} 
	\end{align*}
\end{definition}
All three classes restrict circuit families to polynomial size and constant depth, 
but we get $\FNC{R}{0} \subsetneq \FSAC{R}{0} \subsetneq \FAC{R}{0}$. 
For example, $(\sum_{i=1}^n x_i)_{n \in \Natural} \in \FSAC{R}{0} \setminus \FNC{R}{0}$ and 
$(\prod_{i=1}^n x_i)_{n \in \Natural} \in \FAC{R}{0} \setminus \FSAC{R}{0}$, 
because using only binary gates they require circuit families of logarithmic depth. 

Because these circuits can only compute functions where each component is defined by a polynomial, 
which are continuous, while transformers can compute uncontinuous functions, 
we need to extend the circuit bases with other functions. 
For this purpose we use the following notation. 
\begin{definition}
	Let $(S, B)$ and $(S, A)$ be circuit bases. 
	Then $\sd{(S, B)\!}{f_s}{\!f_d}[A] = \sd{(S, A \cup B)\!}{f_s}{\!f_d}$. 
	For a semiring $R$, we use $\FCirc{R}[A]$ for functions computed by circuits over 
	$(R, \{+_*, \mult_*\} \cup A)$. 
\end{definition}
When explicitly listing the contents of the set $A$, the braces will be omitted, 
e.g. $\FNC{R}{0}[f, g]$ instead of $\FNC{R}{0}[\{f, g\}]$. 

\subsection{Characteristic Functions}

It will later be useful to distinguish elements of a fixed finite set. 
This requires the following kind of function. 
\begin{definition}
	Let $A$ be a set and $a \in A$. 
	Then the function $\charfin{A}{a} \colon A \to \{0,1\}$ is defined as 
	\[\charfin{A}{a} (x) = \begin{cases} 1 &\text{if } x = a \text{ and} \\ 0 &\text{otherwise}\end{cases}.\]
    \vspace{-12pt}
\end{definition}
Such a function is not generally computable by an arithmetic circuit unless $A = \{0, 1\}$. 
One way to mitigate this fact is adding the $\zero$ function to a ring. 
It is defined as \[\zero(x) = \begin{cases} 1 &\text{if } x = 0 \\ 0 &\text{otherwise} \end{cases}. \]
\vspace{-8pt}
\begin{lemma} \label[lemma]{thm:charfin_zero}
	Let $R$ be a ring and $r \in R$. 
	Then $\charfin{R}{r} \in \FCirc{R}[\zero]$. 
\end{lemma}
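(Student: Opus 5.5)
The plan is to build an explicit constant-size circuit over the basis $(R, \{+_*, \mult_*, \zero\})$ with one input gate $\ingate{1}$ and one output gate $\outgate{1}$ that computes $\charfin{R}{r}(x) = \zero(x - r)$. The whole argument rests on this identity: $\zero(x - r) = 1$ exactly when $x - r = 0$, that is when $x = r$, and it is $0$ otherwise. This is precisely the definition of $\charfin{R}{r}$, with the values $0,1$ read as the ring elements $0_R, 1_R$.

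The only thing to check is that $x - r$ can be formed using just the allowed gates. Since $R$ is a ring, $(R,+)$ is an abelian group, so $-r \in R$ exists and can be placed directly in a constant gate labelled $-r$. We then feed $\ingate{1}$ and this constant gate into a single addition gate $+_*$ of fan-in $2$, which computes $x + (-r)$. If we prefer to use only the constant $-1$, we could instead compute $x + (-1)\mult r$ using constant gates $r$ and $-1$ and one multiplication gate; either variant works.

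Finally, a $\zero$ gate takes the output of the addition gate, and its result is wired to the output gate $\outgate{1}$. The circuit has constant size and depth, so membership in $\FCirc{R}[\zero]$ follows immediately from the definition of that class.

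I do not expect a genuine obstacle. The one point that needs care is that the ring structure is essential: over a mere semiring such as $\Natural$, the element $-r$ need not exist. The lemma's hypothesis that $R$ is a ring is exactly what licenses the constant gate labelled $-r$.
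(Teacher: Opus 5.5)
Your proposal is correct and is essentially the paper's proof: the paper just writes the identity $\charfin{R}{r}(x) = \zero(x - r) = \zero(x + (-1) \mult r)$ and reads it as a circuit. You spell out the gates explicitly and correctly note that the ring hypothesis is what supplies the constant $-r$ (or $-1$).
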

\begin{proof}
	The following formula 
	describes such a circuit: 
	\[\charfin{R}{r}(x) = \zero(x - r) = \zero(x + (-1) \mult r) \qedhere\] 
\end{proof}
If $R$ is a field, we can instead use Lagrange polynomials to interpolate any finite set of values.
\begin{restatable}{lemma}{lagrange} \label{thm:charfin_field}
	Let $R$ be a field, $A \subseteq R$ be finite and $a \in A$. 
	Then there is a polynomial $\fpoly{A}{a} \colon R \to R$ in $\FCirc{R}$ 
	such that $\fpoly{A}{a}(x) = \charfin{A}{a}(x)$ for all $x \in A$. 
\end{restatable}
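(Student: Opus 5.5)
We take the Lagrange basis polynomial for the node $a$ over the finite set $A$:
\[
	\fpoly{A}{a}(x) = \prod_{b \in A \setminus \{a\}} (a - b)^{-1} \mult (x - b)
	= \Big(\prod_{b \in A \setminus \{a\}} (a - b)^{-1}\Big) \mult \prod_{b \in A \setminus \{a\}} (x + (-1) \mult b).
\]
Since $A$ is finite and its elements are pairwise distinct, each difference $a - b$ with $b \neq a$ is nonzero. Because $R$ is a field, each such difference therefore has an inverse, so the leading coefficient $c_a = \prod_{b \ne a} (a-b)^{-1}$ is a well-defined element of $R$.

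Correctness is a direct check on the two cases. For $x = a$, every factor becomes $(a-b)^{-1}(a-b) = 1$, so the product is $1$. For $x = b_0 \in A$ with $b_0 \ne a$, the factor indexed by $b_0$ is $0$. Since $0$ is absorbing, the whole product is then $0$. Hence $\fpoly{A}{a}(x) = \charfin{A}{a}(x)$ on $A$. If $A = \{a\}$, the product is empty and $\fpoly{A}{a}$ is the constant $1$, which is still correct.

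Next we build a circuit computing this polynomial. It has one input gate $\ingate{1}$ and a constant gate $c_a$. For each $b \in A\setminus\{a\}$ it has a constant gate $-b$ and a $+_*$ gate fed by $\ingate{1}$ and $-b$. A single $\mult_*$ gate takes $c_a$ and all these sums, and feeds the output gate $\outgate{1}$. Since $R$ is a field it is a ring, so $-1$ and $-b$ exist as constants. The circuit has constant size $|A| + O(1)$ and depth $3$, so the function lies in $\FCirc{R}$. Commutativity of multiplication in a field means the ordering imposed by $\alpha$ does not matter.

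We do not expect a real obstacle here. The only point needing care is that the inverses exist, which uses the field property together with the distinctness of the elements of $A$.
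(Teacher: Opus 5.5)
Your proposal is correct and is the paper's proof: both take the Lagrange basis polynomial $\prod_{b \in A \setminus \{a\}} \frac{x-b}{a-b}$ and observe that a constant-size unbounded fan-in circuit with constants from $R$ computes it. The paper uses the individual inverses $(a-b)^{-1}$ as constant gates where you fold them into one constant $c_a$; your verification on $A$ and explicit circuit construction fill in details the paper leaves implicit.
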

The proof can be found in the \hyperref[app:lagrange]{Appendix}. 

\subsection{Generalized Transformers}
\label{sec:generalized_transformers}

We adapt a definition by \citeauthoryearpar{hao-etal-2022-formal}
to describe transformer encoders using values from arbitrary semirings. 
Also, instead of using them for decision problems, 
we look at the functions computed by encoders with no output layer. 
This allows us to directly output real values and compare them to the values computed by circuit families. 

A transformer can have multiple layers with multiple attention heads. 
Each head uses an attention function to compute attention scores between pairs of vectors and 
a pooling function to combine all vectors based on those. 
The activation function of a layer takes in the results of its attention heads. 
Usually, the activation functions are computed by feedforward networks, 
the attention functions are dot-products and 
the pooling functions apply softmax to the attention scores and compute a weighted sum of all vectors. 

\begin{definition}
    A \emph{generalized transformer} with $K$ layers and $H$ attention heads is a tuple 
    $(R,\allowbreak d,\allowbreak \Sigma,\allowbreak \finput,\allowbreak \positional,\allowbreak 
		\attention_{1,1},\allowbreak \attention_{1, 2}, \dots, \attention_{K,H},\allowbreak 
		\pooling_{1,1}, \dots, \pooling_{K,H},\allowbreak
        \activation_1, \dots, \activation_K)$ where
    \begin{noseptemize}
		\item $R$ is the underlying semiring, 
		\item $d \in \Natural$ is the dimension, 
		\item $\Sigma$ is the (possibly infinite) input alphabet, 
        \item $\finput\colon \Sigma \to R^d$ is the input embedding, 
        \item $\positional\colon \Natural \times \Natural \to R^d$ is the positional embedding, 
        \item $\attention_{k,h}\colon R^d \times R^d \to R$ is the attention function for attention head $h$ in layer $k$,
        \item $\pooling_{k,h}\colon (R^d)^* \times R^* \to R^d$ is the pooling function for attention head $h$ in layer $k$ and
		\item $\activation_k\colon (R^d)^{H+1} \to R^d$ is the activation function for layer $k$.
    \end{noseptemize}
	Such a transformer computes a function $T \colon \Sigma^* \to (R^d)^*$ defined by induction over $k$. 
	First, the input sequence $X = (x_1, \dots, x_n) \in \Sigma^*$ is turned into 
	a sequence of $R^d$-vectors using input and positional embedding: 
	\[Y^0 = \big(\finput(x_1) + \positional(1, n), \dots, \finput(x_n) + \positional(n, n) \big)\]
	Then each layer $k$ transforms the previous sequence $Y^{k-1}$ into a sequence $Y^k$ of the same length in the following way. 
	The attention function of each attention head $h$ produces an $n \times n$ matrix of attention values: 
	\[a^{k,h}_{i,j} = \attention_{k,h} (Y^{k-1}_i, Y^{k-1}_j) \quad\text{for } 1 \le i,j \le n\]
	The pooling function of each attention head $h$ turnes it back into a vector sequence: 
	\[Z^{k,h}_i = \pooling_{k,h} \big(Y^{k-1}, (a^{k,h}_{i,1}, \dots, a^{k,h}_{i,n}) \big) \quad\text{for } 1 \le i \le n\]
	Finally, the activation function of the layer is applied to each vector in those sequences: 
	\[Y^k_i = \activation_k \big(Y^{k-1}_i, Z^{k, 1}_i, \dots, Z^{k, H}_i \big) \quad\text{for } 1 \le i \le n\]
	The output of the transformer is just the output of the last layer: $T(X) = Y^K$. 
\end{definition}

\subsection{Restricted Transformers}
\label{sec:restricted_transformers}

We can define classes of transformers by restricting the choices for the functions they use. 
\begin{definition}
	Let $R$ be a semiring and $A, B, C, D, E$ be sets of functions. 
	Then $\transform[B]{R}{A}{C}{D}{E}$ is the set of transformers using 
	$R$ as the underlying semiring, input embeddings from $A$, 
	positional embeddings from $B$, activation functions from $C$, 
	attention functions from $D$ and pooling functions from $E$ and 
	$\transfunc[B]{R}{A}{C}{D}{E}$ is the set of functions computed by them. 
\end{definition}
To describe the original transformer model by \citeauthoryearpar{attention_is_all_you_need}, 
we define three sets of functions. 
$\dpa$ is the set of attention functions computable via dot-products. 
$\WS{f}$ describes a weighted sum using the resulting values of a function $f$ on the attention scores. 
And $\WP{f}$ is a variation of $\WS{f}$ that uses multiplication instead of addition 
while also excluding zero values and will be used later. 

\begin{definition}
	Let $R$ be a semiring and $f\colon R^* \to R^*$. 
	\allowdisplaybreaks
	\begin{align*}
		\dpa &= \left\{
		\begin{gathered}
			g \colon R^d \times R^d \to R, \\
			(\vecx, \vecy) \mapsto A \vecx \cdot B \vecy
		\end{gathered}
		\,\middle|\, 
		\begin{gathered}
			d \in \Natural \text{ and} \\ A, B \in R^{d \times d}
		\end{gathered}
		\right\}
	\\
		\WS{f} &= \left\{
		\begin{gathered}
			g \colon (R^d)^* \times R^* \to R^d, \\
			(X, \veca) \mapsto \sum_{i=1}^{\vert X \vert} f(\veca)_i \mult X_i 
		\end{gathered}
		\,\middle|\, d \in \Natural
		\right\}
	\\
		\WP{f} &= \left\{
		\begin{gathered}
			g \colon (R^d)^* \times R^* \to R^d, \\
			(X, \veca) \mapsto \bigodot_{\substack{i=1 \\ \snug{f(\veca)_i \neq 0}}}^{\vert X \vert} 
			f(\veca)_i \mult X_i
		\end{gathered}
		\,\middle|\, d \in \Natural
		\right\}
	\end{align*}
	Here $\odot$ is used for element-wise multiplication of vectors. 
\end{definition}
The original model uses $R = \Real$ 
and pooling functions from $\WS{\softmax}$ where 
\[\softmax(\veca)_i = \frac{e^{a_i}}{\sum_{j = 1}^{\vert \veca \vert} e^{a_j}}.\]
This scheme is called soft attention. 
In this paper, we are only concerned with hard attention, 
because it makes it possible to attend only to specific vectors. 
There are three variants of hard attention. 
Leftmost (or unique) and rightmost attention are defined by the functions 
\begin{align*}
	\hardleft(\veca) &= 
	\begin{cases}
		1 & \text{if } i = \min M \\
		0 & \text{otherwise}
	\end{cases}
	\\ \text{ and }  
	\hardright(\veca) &= 
	\begin{cases}
		1 & \text{if } i = \max M \\
		0 & \text{otherwise}
	\end{cases}
	\\ \text{ with } M &= \{i \mid a_i \text{ is maximal in } \{a_1, \dots, a_{\vert \veca \vert}\}\} .
\end{align*}
Here we say that each vector attends to only one vector. 
Average attention is defined by the function 
\[
	\avg(\veca)_i = \begin{cases}
		\frac{1}{\vert M \vert} & \text{if } i \in M \\
		0 & \text{otherwise}
	\end{cases} .
\]
Here each vector attends to $\vert M \vert$ vectors. 

These attention schemes correspond to pooling functions from 
$\WS{\hardleft}, \WS{\hardright}$ and $\WS{\avg}$ respectively. 
They require a (partial) order on $R$ to define maximal elements while 
$\avg$ also relies on the existence of multiplicative inverses of $\Natural \setminus \{0\}$.

\section{Simulating Transformers with Circuits}
\label{sec:simulating_transformers}

The equations describing the computation of a transformer can be seen as a description of a circuit family. 
Their length is only dependent on the length $n$ of the input sequence. 
If all functions used by a generalized transformer $T$ are computable by circuits and we fix $n$, 
we can connect those together to construct a circuit for $T$. 
The circuits for all $n$ form a circuit family. 
This requires the input alphabet $\Sigma$ to be a subset of $R^k$ for some $k$, 
but if it is finite, the complexity of a mapping $f \colon \Sigma \to R^k$ is essentially trivial. 

Because transformers have a fixed number of layers, 
the depth of the resulting circuits grows linearly with the depth of the circuits for the functions they use. 
The pooling functions are the only ones that require a circuit family, 
as all others are defined on fixed-size vectors. 
Therefore, if those families have constant depth, so will the family for the transformer. 
The size also grows linearly with the size of the pooling function circuits, 
but it is at least quadratic in $n$, because of the $n \times n$ matrix of attention values. 
For a detailed explanation of this construction, see \citeauthoryearpar{ma_ehrmuth}. 

\begin{theorem}
    Let $R$ be any semiring, $(R, B)$ a circuit basis and $P \subseteq \{f \colon \Natural^2 \to R\}$. Then 
	$\transfunc[P]{R}{\FCirc{R}[B]}{\FCirc{R}[B]}{\FCirc{R}[B]}{\FNC{R}{0}[B]} \allowbreak \subseteq \FNC{R}{0}[B]$. 
	This also holds if we replace $\FNC{}{}$ with $\FSAC{}{}$ or $\FAC{}{}$. 
\end{theorem}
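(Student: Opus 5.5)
We build, for each input length $n$, one circuit $C_n$ that unrolls the computation of the transformer $T$ layer by layer. We then check that the resulting family has polynomial size and constant depth. The idea is to assemble $C_n$ from the circuits computing the component functions.

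Throughout we assume, as discussed above, that the input alphabet is $\Sigma \subseteq R^{k}$ for a fixed $k$, so that the input embedding is a function in $\FCirc{R}[B]$ on $k$ inputs. The circuit has $kn$ input gates, grouped into $n$ blocks.

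\textbf{Level 0.} For each position $i \le n$ we place one copy of the fixed circuit for $\finput$ on the $i$-th block of inputs. We then add $d$ constant gates carrying the values of $\positional(i,n) \in R^d$. Since $n$ is fixed when constructing $C_n$, these are just constants of $R$, and no computability assumption on $P$ is needed. Finally, $d$ addition gates produce $Y^0_i$.

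\textbf{Level $k \ge 1$, given gates computing $Y^{k-1}$.}
\begin{itemize}
\item For every pair $(i,j)$ and every head $h$, we insert a copy of the fixed circuit for $\attention_{k,h}$ on the gates of $Y^{k-1}_i$ and $Y^{k-1}_j$, obtaining $a^{k,h}_{i,j}$.
\item For every $i$ and $h$, we insert a copy of the $n$-th member (more precisely, the member for input length $dn+n$) of the $\FNC{R}{0}[B]$ family computing $\pooling_{k,h}$. Its inputs are all gates of $Y^{k-1}$ together with $a^{k,h}_{i,1},\dots,a^{k,h}_{i,n}$, and it yields $Z^{k,h}_i$.
\item For every $i$, we insert a copy of the fixed circuit for $\activation_k$ on $Y^{k-1}_i, Z^{k,1}_i,\dots,Z^{k,H}_i$.
\end{itemize}
After layer $K$ we attach $dn$ output gates to the gates computing $Y^K$. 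Correctness follows by induction on $k$ directly from the defining equations of the transformer.

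\textbf{Counting.} The fixed functions ($\finput$, $\attention_{k,h}$, $\activation_k$) have circuits of constant size $c$ and depth $c'$, independent of $n$. The pooling family has size $p(dn+n)$ for a polynomial $p$ and depth bounded by a constant $c''$. Hence
\[
\size(C_n) \le \BigO\big(n + K(Hn^2 c + Hn\, p(dn+n) + nc)\big) = n^{\BigO(1)},
\]
\[
\depth(C_n) \le \BigO\big(c' + K(c' + c'' + c')\big) = \BigO(1),
\]
because $K$, $H$ and $d$ are constants of $T$.

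\textbf{Main obstacle: the gate basis.} The fixed-function circuits come from $\FCirc{R}[B]$, which uses the unbounded gates $+_*$ and $\mult_*$. For $\FAC{}{}$ they can be inserted as they are. For $\FNC{}{}$ and $\FSAC{}{}$ we have to replace each such gate, since every circuit for a fixed function has constant size and hence constant fan-in. We rewrite each $+_*$ and $\mult_*$ gate of fan-in $m$ as a balanced binary tree of $+$ or $\mult$ gates of depth $\lceil \log m\rceil$, which is constant. Gates of fan-in $0$ or $1$ are replaced by the constants $0$ or $1$, or by a wire.

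This keeps both size and depth within constant factors, uniformly in $n$, and uses only $+$, $\mult$ and gates from $B$. Therefore the same construction gives the claim for $\FNC{}{}$, $\FSAC{}{}$ and $\FAC{}{}$, with the pooling family taken from the respective class in each case.
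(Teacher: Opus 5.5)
Your proposal is correct and takes the same approach as the paper. You unroll the transformer for each fixed input length, store the positional embedding in constant gates, and insert copies of the fixed circuits for the embedding, attention and activation functions together with the appropriate member of the pooling family; size is then polynomial and depth constant because $K$, $H$ and $d$ are fixed. You also explicitly convert the unbounded fan-in gates of the constant-size $\FCirc{R}[B]$ circuits into constant-depth binary trees for the $\FNC{R}{0}$ and $\FSAC{R}{0}$ cases, a detail the paper's sketch leaves implicit, and you handle it correctly.
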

There is no restriction on the positional embedding, 
because it does not actually need to be computed by the circuit family. 
Since it is only dependent on the number of inputs, 
its values can be stored in constant gates. 

\subsection{Simulating Restricted Transformers}

Dot-products over any semiring $R$ can be computed by arithmetic circuits, so $\dpa \subseteq \FCirc{R}$. 
For the hard attention functions, we get $\hardleft, \hardright, \avg \in \FSAC{R}{0}[\sign]$ 
where \[\sign(x) = \begin{cases}
	1 &\text{if } x > 0 \\
	0 &\text{if } x = 0 \\
	-1&\text{if } x < 0 
\end{cases}.\]
So $\WS{\hardleft}, \WS{\hardright}, \WS{\avg} \subseteq \FSAC{R}{0}[\sign]$. 
The most common type of activation function is a feed-forward network using the 
rectified linear unit function which are also part of $\FCirc{R}[\sign]$. 
All together, these restricted transformers compute functions in $\FSAC{R}{0}[\sign]$. 

\section{Simulating Circuits with Transformers}
\label{sec:simulating_circuits}

Because transformers are length-preserving, 
they cannot directly compute functions that are not. 
To work around this difference, 
the input to the transformer must be long enough to fit the output of the function. 
We choose a similar approach to \citeauthoryearpar{GNNs_and_arithmetic_circuits}. 
They used graph neural networks on graphs of circuits to compute their output values. 
Since transformers only work on vectors of numbers, we need some way to encode the circuits as such. 

We use one vector for each edge in the circuit, so the length of the encoding is quadratic with respect to the size of the circuit. 
We assume that the gates are numbered with indices from $\Natural$. 
The types of the gates will be encoded with their incoming edges. 
Since the input and constant gates do not have any, we also add one vector for each of those. 

\newcommand{\Types}{T}

\begin{definition}
	Let $R$ be a semiring extension of $\Natural$, $C = (V, E, \alpha, \beta)$ an $R$-circuit with $n$ inputs 
	and $\vecu \in R^n$. 
	We fix a set $\Types = \{\tconst, \tinput, \toutput, \tplus, \ttimes\} \subseteq R$ 
	of different constants for the gate types. 
	For $V = \{v_1, \dots, v_{\vert V \vert}\}$ we define the \emph{encoding} $\encoding(C, \vecu) \in (R^5)^*$ as follows: 
	\begin{noseptemize}
		\item For each constant gate $v_i \in \{v \mid v \in V, \beta(v) \in R\}$ 
			we add the vector $(i, 0, 0, \tconst, \beta(v_i))$. 
		\item For each input gate $v_i \in \{v \mid \beta(v) \in \{\ingate{1}, \dots, \ingate{n}\}\}$ 
			we add the vector $(i, 0, 0, \tinput, u_k)$ where $\beta(v_i) = \ingate{k}$. 
		\item For each edge $(v_i, v_j) \in E$ we add the vector $(j, i, \alpha((v_i, v_j)), t, 0)$ 
			where $t$ is either $\tplus, \ttimes$ or $\toutput$ 
			depending on whether $v_j$ is a $+, \times$ or output gate. 
	\end{noseptemize}
\end{definition}
To improve the readability, 
we denote the components $(x_1, x_2, x_3, x_4, x_5)$ of these vectors by $(\Vs{x}, \Vp{x}, \Vi{x}, \Vt{x}, \Vv{x})$, 
so the vector $\vecx$ for edge $(u, v)$ stores 
	the index of the successor $v$ in $\Vs{x}$, 
	the index of the predecessor $u$ in $\Vp{x}$, 
	$\alpha((u, v))$ in $\Vi{x}$, 
	the type of $v$ in $\Vt{x}$ and 
	the value assigned to $v$ in $\Vv{x}$. 
\begin{example} \label[example]{example_encoding_unbounded}
	\begin{figure}[t]
		\centering
		\begin{tikzpicture}
			\graph[branch right, grow up, nodes={circle, draw, minimum size=0.75cm, inner sep=0pt}]
			{
				{ [nodes={y=-0.5}] 1/$\ingate{1}$, 2/$\ingate{2}$, 3/$\ingate{3}$, 4/$\ingate{4}$ }
				-> { [nodes={x=1.5}] 6/$+$ -> 7/$\times$ -> 8/$\outgate{1}$ };
			};
			\node (5) [circle, draw, minimum size=0.75cm, inner sep=0pt] at ($(6) - (1, 0)$) {$\frac{1}{4}$};
			\draw[->] (5) -- (7);

			\foreach \i in {1, ..., 8}
				\node at (\i.east) [anchor=north west, inner xsep=0pt] 
					{\scriptsize $\i$};
		\end{tikzpicture}

		\caption{An unbounded fan-in circuit with a numbering for the gates. The edges are numbered from left to right. }
		\label{example_encoding_unbounded_figure}
	\end{figure}
	\begin{table}[t]
		\caption{
			Encoding of the circuit in \cref{example_encoding_unbounded_figure} 
			with input vector $\overline{u}$. 
		}
		\label{tab:example_encoding_unbounded}
		\vskip 0.1in \centering
		\begin{tabular}{VVVVVV}
			\toprule & \Vs{x} & \Vp{x} & \Vi{x} & \Vt{x} & \Vv{x} \\ \midrule
			\vecx_{1}  & 1 & 0 & 0 & \tinput & u_1 			\\
			\vecx_{2}  & 2 & 0 & 0 & \tinput & u_2 			\\
			\vecx_{3}  & 3 & 0 & 0 & \tinput & u_3 			\\
			\vecx_{4}  & 4 & 0 & 0 & \tinput & u_4 			\\
			\vecx_{5}  & 5 & 0 & 0 & \tconst &\frac{1}{4}	\\
			\vecx_{6}  & 6 & 1 & 1 & \tplus  & 0 			\\
			\vecx_{7}  & 6 & 2 & 2 & \tplus  & 0 			\\
			\vecx_{8}  & 6 & 3 & 3 & \tplus  & 0 			\\
			\vecx_{9}  & 6 & 4 & 4 & \tplus  & 0 			\\
			\vecx_{10} & 7 & 5 & 1 & \ttimes & 0 			\\
			\vecx_{11} & 7 & 6 & 2 & \ttimes & 0 			\\
			\vecx_{12} & 8 & 7 & 1 & \toutput& 0 			\\
			\bottomrule
		\end{tabular}
	\end{table}
	The circuit in \cref{example_encoding_unbounded_figure} will be encoded into 
	the vector sequence in \cref{tab:example_encoding_unbounded}.
	The addition gate has predecessors $1, 2, 3, 4$ with each edge index matching its predecessor. 
	The multiplication gate has two edges: 
	The first one from gate $5$ and the second one from gate $6$. 
	The output node has only one predecessor and, therefore, the encoding has only one vector with $\toutput$. 
\end{example}

\begin{definition}
	We say that a transformer $T$ \emph{simulates} a circuit $C$ with $n$ inputs encoded with $\encoding$, 
	if for all $\vecx \in R^n$, 
	the values assigned to the incoming edges of output gates in $T(\encoding(C, \vecx))$ are exactly $f_C(\vecx)$. 
\end{definition}
We can use attention heads to fetch the values of the predecessors of each gate 
and propagate them along the edges. 
This is a relatively simple construction if we leave some freedom for the attention and pooling functions, 
but it is still possible with a restriction to hard dot-product attention. 
The proofs for all statements in this section have been deferred to the appendix. 

\subsection{Using Generalized Attention}
\label{sec:sim_generalized}

For the input embedding we use the identity function $\inputIdentity$, 
so the transformer uses the same dimension as the encoding. 
We define two attention functions $\attention_E, \attention_V \colon R^5 \times R^5 \to R$. 
The first one, \[\attention_E(\vecx, \vecy) = \begin{cases} 1 
	&\text{if } \Vp{x} = \Vs{y} \\ 0 &\text{otherwise} \end{cases},\]
lets $\vecx$ attend only to those vectors describing incoming edges of the $\Vp{x}$-th gate. 
It will be used to transfer the values from the incoming to the outgoing edges of each node. 
The second attention function, \[\attention_V(\vecx, \vecy) = \begin{cases} 1 
	&\text{if } \Vs{x} = \Vs{y} \\ 0 &\text{otherwise} \end{cases},\]
lets the vectors of all incoming edges of the same gate attend only to each other.
It will be used to compute the values of the gates from the values of its predecessors. 
If $R$ is a ring, both functions are computed by an $R$-circuit using the $\zero$ function, 
since $\attention_E(\vecx, \vecy) = \zero(\Vp{x} - \Vs{y})$ and $\attention_V(\vecx, \vecy) = \zero(\Vs{x} - \Vs{y})$. 

Now we can define one pooling function for addition and one for multiplication to simulate the arithmetic gates. 
They will be used by different attention heads in the same layer 
and the activation function will choose which one to use based on the gate type. 
One possible choice for pooling functions is
\[
	\pooling_+(X, \veca) = \sum_{i=1}^{\vert X \vert} a_i \mult X_i \text{ and }
	\pooling_\times(X, \veca) = \bigodot_{\substack{i=1 \\ a_i \neq 0}}^{\vert X \vert} a_i \mult X_i. 
\]
Note that $\pooling_+ \in \WS{\id}$ and $\pooling_\times \in \WP{\id}$ where $\id\colon R^* \to R^*$ is the identity on sequences. 
To choose the right value based on the gate type, we need the functions 
$\charfin{\Types}{t} \colon \Types \to \{0,1\}$ for $t \in \Types = \{\tconst, \tinput, \toutput, \tplus, \ttimes\}$. 
We define an activation function $\activation_V\colon (R^5)^3 \to R^5$ as 
\begin{align*}
	\Vv{\activation_V(\xin, \xplus, \xtimes)} = 
	\big(\charfin{T}{\tinput}(\Vt{\xin}) + \charfin{T}{\tconst}(\Vt{\xin})\big) &\mult \Vv{\xin} + \\
	\big(\charfin{T}{\toutput}(\Vt{\xin}) + \charfin{T}{\tplus}(\Vt{\xin})\big) &\mult \Vv{\xplus} + \\
	\charfin{T}{\ttimes}(\Vt{\xin}) &\mult \Vv{\xtimes} 
\end{align*}
and $\activation_V(\xin, \xplus, \xtimes)_j = \xin_j$ for $j \in \{\spit\}$. 
This function retains the values for input and constant gates as well as all information about the circuit. 
Due to \cref{thm:charfin_zero}, we get $\activation_V \in \FCirc{R}[\zero]$ for any ring $R$, 
but if $R$ is a field, we can use polynomials from \cref{thm:charfin_field} instead. 

Because the attention function $\attention_E$ must be used before this to move the predecessor values into $\Vv{x}$, 
we add a second transformer layer. 
The activation function $\activation_E \colon (R^5)^3 \to R^5$ for this layer is defined as 
\begin{multline*}
	\Vv{\activation_E(\vecx, \vecy, \vecz)} = 
	\big(\charfin{T}{\tinput}(\Vt{x}) + \charfin{T}{\tconst}(\Vt{x})\big) \mult \Vv{x} + \\
	\big(\charfin{T}{\toutput}(\Vt{x}) + \charfin{T}{\tplus}(\Vt{x}) + \charfin{T}{\ttimes}(\Vt{x})\big) 
	\mult \Vv{y} 
\end{multline*}
and also retains the other values. 
For the pooling function we can use $\pooling_+$ or $\pooling_\times$, 
because each vector of a correct encoding attends to at most one vector. 

When we use all these functions to define a transformer 
\vspace{-0.5\baselineskip}
\begin{multline*}
	\Big(R, 5, R^5, \inputIdentity, 0, \attention_E, 0, \attention_V, \attention_V, 
	\\
	\pooling_+, 0, \pooling_+, \pooling_\times, \activation_E, \activation_V \Big)
\end{multline*}
with two layers, it is able to simulate circuits of depth $1$ and 
by stacking more copies of these two layers, we can simulate circuits of higher depth. 
We write $0$ for functions of unused attention heads, 
but $\attention_E$ and $\pooling_+$ could be used instead, 
because their values are ignored by $\activation_E$. 

\begin{restatable}{theorem}{generalized} \label{thm:generalized_id}
	Let $R$ be a commutative ring extension of $\Integers$. 
	Then for each $K \in \Natural$ there is a transformer 
	in $\transform{R}{\{\inputIdentity\}}{\FCirc{R}[\zero]}
	{\FCirc{R}[\zero]}{\WS{\id} \cup \WP{\id}}$ 
	of dimension $5$ with $2K$ layers 
	that simulates any unbounded fan-in $R$-circuit of depth $\le K$ encoded with $\encoding$.
\end{restatable}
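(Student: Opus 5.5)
The plan is to take the transformer built in \cref{sec:sim_generalized}, i.e.\ $K$ stacked copies of the two layers $(\attention_E,\pooling_+;\activation_E)$ and $(\attention_V,\pooling_+,\pooling_\times;\activation_V)$, and prove by induction on $k \le K$ an invariant on $Y^{2k}$. Number the levels of the circuit by the length of the longest path from an input or constant gate to the gate. The invariant is:
\begin{noseptemize}
	\item the components $\Vs{x},\Vp{x},\Vi{x},\Vt{x}$ of every vector are unchanged;
	\item the vector of an input or constant gate $v_i$ carries $f_{C,v_i}(\vecu)$ in $\Vv{x}$;
	\item every edge vector whose successor gate $v_j$ has level $\le k$ carries $f_{C,v_j}(\vecu)$ in $\Vv{x}$.
\end{noseptemize}
Since output gates sit at level at most $K$ (depth $\le K$ counts the path up to output gates), the case $k=K$ yields exactly the statement of simulation. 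Membership of the activation and attention functions in $\FCirc{R}[\zero]$ follows from \cref{thm:charfin_zero}, the expressions $\zero(\Vp{x}-\Vs{y})$ and $\zero(\Vs{x}-\Vs{y})$, and the ring structure. The pooling functions lie in $\WS{\id}\cup\WP{\id}$ by definition, and the dimension is $5$.

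\textbf{Odd layer $2k-1$.} Take an edge vector $\vecx$ for $(v_i,v_j)$. The attention values $\attention_E(\vecx,\vecy)$ are $1$ exactly for those $\vecy$ with $\Vs{y}=i$, i.e.\ the vectors belonging to gate $v_i$. Here I must check a subtlety: if $v_i$ is an input or constant gate, there is exactly one such vector. If $v_i$ is an arithmetic gate, \emph{all} its incoming edge vectors match. By the invariant, each of these carries the same value $f_{C,v_i}(\vecu)$ once $v_i$ has level $<k$. So $\pooling_+$ would return a multiple of that value, not the value itself, and the paper's claim that ``each vector attends to at most one vector'' appears false for fan-in $>1$.

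I would repair this in one of two ways. The first option is to read the invariant as holding on the edge with $\Vi{x}=1$ only. The cleaner option is to change $\attention_E$ to $\zero(\Vp{x}-\Vs{y})\mult\zero(\Vi{y}-1)+\zero(\Vp{x}-\Vs{y})\mult\zero(\Vi{y})$, so that $\vecx$ attends exactly to the first incoming edge of $v_i$, or to its unique vector if $v_i$ has fan-in $0$. Every gate other than input and constant gates has an edge with index $1$ because $\alpha$ is onto $\{1,\dots,\vert\neighbourin\vert\}$. This modified function is still in $\FCirc{R}[\zero]$ and is still dot-product–free, which the theorem allows. With it, $\pooling_+$ returns $f_{C,v_i}(\vecu)$ exactly (the sum has one nonzero term, weighted by $1$). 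The function $\activation_E$ then writes this into $\Vv{x}$ for edge vectors and keeps $\Vv{x}$ for input/constant vectors, using that the $\charfin{T}{t}$ sum to $1$ on $T$.

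\textbf{Even layer $2k$.} Fix a gate $v_j$ of level $\le k$ that is not an input or constant gate. All its incoming edge vectors attend, via $\attention_V$, exactly to each other. Each now holds the value of its predecessor, and every predecessor has level $<k$. So $\pooling_+$ yields $\sum_u f_{C,u}(\vecu)$, which covers $+_*$ gates and fan-in-one output gates. The function $\pooling_\times$ yields the product over the attended vectors; since all weights equal $1\ne 0$, none is dropped. Commutativity of $R$ makes this product well defined regardless of order, and it equals $\mult_*$ of the predecessor values. The function $\activation_V$ then selects the correct one by $\Vt{x}$. Input and constant vectors keep their values.

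\textbf{Remaining checks and the main obstacle.} Gates of level $>k$ may hold garbage, but that is harmless: the invariant only concerns levels $\le k$, and a later correct recomputation overwrites the garbage. I also need the constants in $T$ to be distinct and the indices $i\in\Natural$ to embed injectively via the extension of $\Integers$, so that the $\zero$ tests behave correctly. The main obstacle is the attention-multiplicity issue in the odd layer, and that is the step where I would first try the modification of $\attention_E$ described above.
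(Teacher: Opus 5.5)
Your proposal is correct. It uses the paper's level-by-level induction: two layers per level, $\activation_E$ copying a predecessor value onto each edge vector, and $\activation_V$ combining the in-edge vectors of a gate. It makes one substantive change, and that change is actually needed, because your diagnosis of the first layer is right.

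The paper's proof concludes $\Vv{x'} = \Vv{w} = f_{C,p}(\vecu)$ and justifies the pooling choice by claiming each vector attends to at most one vector. But $\attention_E$ selects every in-edge vector of $p$. So $\pooling_+ \in \WS{\id}$ returns $\vert\neighbourin(p)\vert \cdot f_{C,p}(\vecu)$, and $\pooling_\times$ would return $f_{C,p}(\vecu)^{\vert\neighbourin(p)\vert}$. On the circuit of \cref{example_encoding_unbounded}, tracing the paper's six layers puts $2(u_1+u_2+u_3+u_4)$ on the output edge instead of $\frac14(u_1+u_2+u_3+u_4)$. That step is exactly what makes the average-attention version in \cref{thm:average_FAC} work, since averaging identical values is harmless; it does not transfer to $\WS{\id}\cup\WP{\id}$.

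Your restricted $\attention_E$ repairs this at no cost. Injectivity of $\Integers \to R$, together with $\alpha$ being a bijection from the in-edges of $p$ onto $\{1,\dots,\vert\neighbourin(p)\vert\}$, makes it equal to $1$ on exactly one vector. It stays in $\FCirc{R}[\zero]$, the dimension stays $5$, and the rest of your invariant argument goes through, including the recomputation of lower levels in every round.

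Your first option would not suffice as stated. $\activation_V$ writes the gate value onto every in-edge, so reading the invariant on the edge with $\Vi{x}=1$ changes nothing. You would additionally have to zero out the edges with $\Vi{x}\neq 1$ in $\activation_V$, e.g.\ with a factor $\zero(\Vi{x}-1)$; that also works but is less clean.

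Like the paper, you implicitly assume two things:
\begin{itemize}
\item Function gates have fan-in at least $1$. A fan-in-$0$ gate has no vector in the encoding at all, so your claim that every non-input, non-constant gate has an edge with index $1$ depends on this.
\item Depth counts paths that start at constant gates as well as input gates.
\end{itemize}
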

This also implies that, for each $\FAC{R}{0}$-circuit family, 
there is a transformer that can simulate all of its circuits.

\subsection{Using Hard Attention}

A similar construction can be done using average dot-product attention, but there are two key differences. 
Firstly it attends only to the maximal attention values. 
Therefore we need two dot-products $\attention_E, \attention_V$ such that 
for a fixed $\vecx$ the values $\attention_E(\vecx, \vecy)$ and $\attention_V(\vecx, \vecy)$ reach their maximum 
among all vectors $\vecy$ when $\Vp{x} = \Vs{y}$ and $\Vs{x} = \Vs{y}$ respectively. 
Secondly average attention scales the result by a factor of $\frac{1}{k}$ where $k$ is the number of such values. 
This can be counteracted by computing how many vectors were attended to and multiplying the result by that number. 
We will use the pooling functions 
\begin{align*}
	\pooling_+(X, \veca) &= \sum_{i=1}^{\vert X \vert} \avg(\veca)_i \mult X_i \in \WS{\avg} \text{ and}\\
	\pooling_\times(X, \veca) &= \bigodot_{\substack{i=1 \\ \snug{\avg(\veca)_i \neq 0}}}^{\vert X \vert} 
		\avg(\veca)_i \mult X_i \in \WP{\avg} .
\end{align*}
They require the underlying semiring to have a (partial) order and 
we assume it is compatible with the usual order of the integers. 
Then we can define \begin{align*}
	\attention_E(\vecx, \vecy) &= \Vp{x}^2 - (\Vs{y} - \Vp{x})^2 \\
	&= 2\Vp{x}\Vs{y} - \Vs{y}^2 = 2\Vs{y} \mult \Vp{x} - \Vone{x} \mult \Vssq{y} \text{ and}\\
	\attention_V(\vecx, \vecy) &= \Vs{x}^2 - (\Vs{y} - \Vs{x})^2 \\
	&= 2\Vs{x}\Vs{y} - \Vs{y}^2 = 2\Vs{x} \mult \Vs{y} - \Vone{x} \mult \Vssq{y} .
\end{align*}
This is implemented by adding two more dimensions to each vector and 
setting them to $\Vone{x} = 1$ and $\Vssq{x} = \Vs{x}^2$. 
These values can be set either by the input embedding or by the activation function of an additional transformer layer. 

To count the predecessors of each gate, we can use the numbering of the edges. 
When using average attention on a vector with predecessors $\vecx^1, \dots, \vecx^k$ 
having values $\{1, \dots, k\}$ in $\Vi{x}$, 
the result $\vecy$ of the pooling function $\pooling_+$ has 
\[\Vi{y} = \sum_{j=1}^k \frac{j}{k} = \frac{k(k+1)}{2k} = \frac{k+1}{2}.\]
Therefore we can compute the number of predecessors as $k = 2 \mult \Vi{y} - 1$. 
If this computation is also done in the first layer, 
we can multiply the intermediate values by $k$ to correct for the scaling of average attention. 

Because the definition of average attention requires multiplicative inverses of $\Natural \setminus \{0\}$, 
the following theorems only hold, when $R$ is an extension of $\Rational$. 
This also allows for using polynomials from \cref{thm:charfin_field} for the $\charfin{T}{t}$ functions. 

The modified version of $\activation_E$ is defined as 
\begin{multline*}
	\Vv{\activation_E(\vecx, \vecy, \vecz)} = 
	\big(\charfin{T}{\tinput}(\Vt{x}) + \charfin{T}{\tconst}(\Vt{x})\big) \mult \Vv{x} + \\
	\big(\charfin{T}{\toutput}(\Vt{x}) + \charfin{T}{\tplus}(\Vt{x}) + \charfin{T}{\ttimes}(\Vt{x})\big) 
	\mult (2 \mult \Vi{z} - 1) \mult \Vv{y} 
\end{multline*}
and $\activation_E(\vecx, \vecy, \vecz)_j = \vecx_j$ for $j \in \{\spiton\}$. 
For $\activation_V$, we use the previous definition while also retaining the new components. 
Finally, we define the input embedding $\finput\colon R^5 \to R^7$ as 
$\finput(\vecx) = \big( \Vs{x}, \dots, \Vv{x}, 1, \Vs{x}^2 \big)$.
The new two layer transformer is now 
\vspace{-0.5\baselineskip}
\begin{multline*}
	\Big(R, 7, R^5, \finput, 0, \attention_E, \attention_V, \attention_V, \attention_V, 
	\\
	\pooling_+, \pooling_+, \pooling_+, \pooling_\times, \activation_E, \activation_V \Big).
\end{multline*}

\begin{restatable}{theorem}{average} \label{thm:average_FAC}
	Let $R$ be a $\Integers$-ordered, commutative ring extension of $\Rational$. 
	Then for each $K \in \Natural$ there is a transformer 
	in $\transform{R}{\FCirc{R}}{\FCirc{R}}{\dpa}{\WS{\avg} \cup \WP{\avg}}$ 
	of dimension $7$ with $2K$ layers 
	that simulates any unbounded fan-in $R$-circuit of depth $\le K$ encoded with $\encoding$.
\end{restatable}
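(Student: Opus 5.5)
We prove the statement by induction on the depth $K$, following the structure of the proof of \cref{thm:generalized_id}. The invariant after $2k$ layers is the following. Every vector of $\encoding(C,\vecu)$ keeps its structural components $\Vs{x},\Vp{x},\Vi{x},\Vt{x}$, and $\Vone{x}=1$, $\Vssq{x}=\Vs{x}^2$, unchanged. Moreover, $\Vv{x}$ equals $f_{C,v_{\Vs{x}}}(\vecu)$ whenever gate $v_{\Vs{x}}$ has depth $\le k$, where depth means the length of the longest path from an input or constant gate. The base case $k=0$ holds by the input embedding $\finput$. Input and constant vectors are never changed: both activation functions keep $\Vv{x}$ when $\Vt{x}\in\{\tinput,\tconst\}$. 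The same holds for any gate already correct after $2k$ layers, since its value is recomputed from the same, already correct, predecessor values. So it suffices to analyse one pair of layers and show that a gate of depth $k+1$ becomes correct.

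The first step is the attention maxima. We check that $\attention_E$ and $\attention_V$ are in $\dpa$: with the extra components $\Vone{}$ and $\Vssq{}$, each is a bilinear form $A\vecx\cdot B\vecy$ with fixed integer matrices. Next we use $\attention_E(\vecx,\vecy)=\Vp{x}^2-(\Vs{y}-\Vp{x})^2$. All indices are images of integers, and the order is $\Integers$-ordered and compatible with $+$ and $\mult$. Hence $(\Vs{y}-\Vp{x})^2\ge 0$, with equality exactly when $\Vs{y}=\Vp{x}$, and the value is strictly positive otherwise (at least $1$).

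This step is the main obstacle. The order is only a partial order on $R$, so we must argue inside the image of $\Integers$, where the order is total. All attention scores are integers, so $\max$ and hence $\avg$ are well defined. The maximizing set is then exactly $\{\vecy\mid\Vs{y}=\Vp{x}\}$, provided some vector has $\Vs{y}=\Vp{x}$. This holds for every edge vector, since its predecessor gate always has at least one vector: an input or constant vector, or an incoming edge vector. Similarly, for $\attention_V$ the maximizers are exactly the vectors with $\Vs{y}=\Vs{x}$.

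In the first layer of the pair, the edge vector for $(v_i,v_j)$ attends to all $m$ vectors with $\Vs{}=i$, all of which carry the same $\Vv{}$ value. We must also ensure the fan-in-zero case still works: an input or constant gate contributes one vector, and its value is stored there. Head $\attention_E$ with $\pooling_+$ therefore yields the average $\Vv{y}=c$ of $m$ equal values $c$. Head $\attention_V$ with $\pooling_+$, on the vector itself, yields $\Vi{z}=\frac{m'+1}{2}$, where $m'$ is the fan-in of $v_j$. This corrects the scaling only if the factor $2\Vi{z}-1$ is applied to the right quantity. Here is a subtle point. The count $2\Vi{z}-1$ in $\activation_E$ is computed over $\Vs{}$ of $\vecx$, not over the attended gate. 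Since every attended vector has the same value $c$, the average already equals $c$, so the intended correction must instead target the second layer's averaging over the $k$ incoming edges. We will verify that the factor indeed cancels the $\frac1{k}$ introduced by $\pooling_+$ in the second layer. After the first layer, each edge of $v_j$ holds $k\cdot f_{C,v_i}(\vecu)$.

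In the second layer, the $\attention_V$ head with $\pooling_+$ gives $\frac1k\sum_i k\, f_{C,v_i}(\vecu)=\sum_i f_{C,v_i}(\vecu)$. The $\pooling_\times$ head multiplies the nonzero weights $\frac1k$ times the vectors. In the $\Vv{}$ component this gives $\prod_i \frac1k\cdot k\,f_{C,v_i}(\vecu)=\prod_i f_{C,v_i}(\vecu)$, using commutativity. Output gates have $k=1$. We select among these results with the polynomials $\fpoly{T}{t}$ from \cref{thm:charfin_field}, so $\activation_E,\activation_V\in\FCirc{R}$. The remaining work is to verify that the other components, where the multiplicative head garbles values, are discarded by $\activation_V$.
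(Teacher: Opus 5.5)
Your proposal is correct and follows the paper's argument. The paper's proof also reduces to the induction in the proof of \cref{thm:generalized_id}: it observes that $\attention_E$ and $\attention_V$ are maximized exactly on the vectors where the earlier $0/1$ attention functions output $1$, and that the factor $2\Vi{z}-1$ (the fan-in of the successor gate) cancels the $\frac{1}{k}$ scaling of average attention in the second layer. Your write-up spells out details the paper leaves implicit, such as working in the totally ordered image of $\Integers$ and the $\frac1k \cdot k$ cancellation in the $\WP{\avg}$ head.
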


\begin{example} \label[example]{example_sim_unbounded}
	\begin{table}
		\caption{Attention values $\attention_V(\vecx, \vecy)$ for \cref{example_encoding_unbounded} 
			where $\vecx$ is given by column and $\vecy$ by row. The max of each column is bold. 
		}
		\label{tab:example_sim_unbounded_attention_V}
		\vskip 0.1in \centering
		\begin{tabular}{VVVVVVSSV}
			\toprule \attention_V & \vecx_1 & \vecx_2 & \vecx_3 & \vecx_4 & \vecx_5 & 
				\hspace{-6pt} \vecx_6, \dots, \vecx_9 \hspace*{-4pt} & 
				\hspace{-4pt} \vecx_{10}, \vecx_{11} \hspace*{-6pt} & \vecx_{12} \\ \midrule
			\vecx_1 	&   \mathbf{1} &   3 &   5 &  7 &  9 & 11 & 13 & 15 \\
			\vecx_2 	&   0 &   \mathbf{4} &   8 & 12 & 16 & 20 & 24 & 28 \\
			\vecx_3 	&  -3 &   3 &   \mathbf{9} & 15 & 21 & 27 & 33 & 39 \\
			\vecx_4 	&  -8 &   0 &   8 & \mathbf{16} & 24 & 32 & 40 & 48 \\
			\vecx_5 	& -15 &  -5 &   5 & 15 & \mathbf{25} & 35 & 45 & 55 \\
			\vecx_6 	& -24 & -12 &   0 & 12 & 24 & \mathbf{36} & 48 & 60 \\
			\vecx_7 	& -24 & -12 &   0 & 12 & 24 & \mathbf{36} & 48 & 60 \\
			\vecx_8 	& -24 & -12 &   0 & 12 & 24 & \mathbf{36} & 48 & 60 \\
			\vecx_9 	& -24 & -12 &   0 & 12 & 24 & \mathbf{36} & 48 & 60 \\
			\vecx_{10}	& -35 & -21 &  -7 &  7 & 21 & 35 & \mathbf{49} & 63 \\
			\vecx_{11}	& -35 & -21 &  -7 &  7 & 21 & 35 & \mathbf{49} & 63 \\
			\vecx_{12}	& -48 & -32 & -16 &  0 & 16 & 32 & 48 & \mathbf{64} \\
			\bottomrule
		\end{tabular}
	\end{table}

	For the circuit from \cref{example_encoding_unbounded}, 
	the attention function $\attention_V$ produces the values in 
	\cref{tab:example_sim_unbounded_attention_V}. 
\end{example}

\subsection{Simulating Bounded Fan-in}

If we only look at semi-unbounded fan-in circuits, 
the second pooling function can be eliminated from the transformer. 
To compute the values of binary multiplication gates, 
we have one attention head attend to the first edge of each gate and one to the second. 
The two values are then multiplied by the activation function. 
We define the attention function $\attention_{B(n)} \colon R^8 \times R^8 \to R$ as 
\begin{multline*}
	\attention_{B(n)} (\vecx, \vecy) 
	= \Vs{x}^2 - (\Vs{y} - \Vs{x})^2 + n^2 - (\Vi{y} - n)^2 \\
	= 2\Vs{x} \mult \Vs{y} - \Vone{x} \mult \Vssq{y} + 2n\Vone{x} \mult \Vi{y} - \Vone{x} \mult \Visq{y} .
\end{multline*}
This requires another dimension with $\Visq{x} = \Vi{x}^2$. 
When $\vecx$ has an $n$-th predecessor, 
the function reaches its maximum value only when $\Vs{y} = \Vs{x}$ and $\Vi{y} = n$. 
We can now replace the attention head using $\pooling_\times$ 
with two attention heads using $\attention_{B(1)}$ and $\attention_{B(2)}$ 
and redefine the second activation function 
\begin{align*}
	\Vv{\activation_V(\xin, \xplus, \vecx^{1}, \vecx^{2})} = \quad&\\
	\big(\charfin{T}{\tinput}(\Vt{\xin}) + \charfin{T}{\tconst}(\Vt{\xin})\big) &\mult \Vv{\xin} + \\
	\big(\charfin{T}{\toutput}(\Vt{\xin}) + \charfin{T}{\tplus}(\Vt{\xin})\big) &\mult \Vv{\xplus} + \\
	\charfin{T}{\ttimes}(\Vt{\xin}) &\mult (\Vv{x^{1}} \mult \Vv{x^{2}}) .
\end{align*}
Because the two attention heads attend only to one value, 
we need to exclude multiplication gates from the correction factor we previously applied in $\activation_E$. 
Therefore 
\begin{align*}
	\Vv{\activation_E(\vecx, \vecy, \vecz, \overline{w})} = 
	\big(\charfin{T}{\tinput}(\Vt{x}) + \charfin{T}{\tconst}(\Vt{x})\big) &\mult \Vv{x} + \\
	\big(\charfin{T}{\toutput}(\Vt{x}) + \charfin{T}{\tplus}(\Vt{x})\big) &\mult (2 \Vi{z} - 1) \mult \Vv{y} \\
	+ \; \charfin{T}{\ttimes}(\Vt{x}) &\mult \Vv{y}. 
\end{align*}
We also add the new dimension to the activation functions and the input embedding 
$\finput(\vecx) = \big( \Vs{x}, \dots, \Vv{x}, 1, \Vs{x}^2, \Vi{x}^2 \big)$. 
The new two layer transformer is now 
\vspace{-0.5\baselineskip}
\begin{multline*}
	\Big(R, 8, R^5, \finput, 0, \attention_E, \attention_V, 0, \attention_V, \attention_{B(1)}, \attention_{B(2)}, 
	\\
	\pooling_+, \pooling_+, 0, \pooling_+, \pooling_+, \pooling_+, \activation_E, \activation_V \Big).
\end{multline*}

\begin{restatable}{theorem}{averages} \label{thm:average_FSAC}
	Let $R$ be a $\Integers$-ordered ring extension of $\Rational$. 
	Then for each $K \in \Natural$ there is a transformer 
	in $\transform{R}{\FCirc{R}}{\FCirc{R}}{\dpa}{\WS{\avg}}$ 
	of dimension $8$ with $2K$ layers 
	that simulates any semi-unbounded fan-in $R$-circuit of depth $\le K$ with $\encoding$.
\end{restatable}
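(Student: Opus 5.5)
We prove \cref{thm:average_FSAC} by induction on the depth, following the same scheme as for \cref{thm:average_FAC} and reusing its analysis of the heads based on $\attention_E$ and $\attention_V$. We stack $K$ copies of the two-layer transformer defined above; the input embedding $\finput$ (a polynomial map, hence in $\FCirc{R}$) adds the components $\Vone{x}=1$, $\Vssq{x}=\Vs{x}^2$ and $\Visq{x}=\Vi{x}^2$. The invariant after $k$ copies is the following. Every component other than $\Vv{x}$ is unchanged. For each gate $v$ of depth at most $k$ (depth measured from the inputs), every vector with $\Vs{x}$ equal to the index of $v$ has $\Vv{x}=f_{C,v}(\vecu)$. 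For input and constant gates this holds at $k=0$ by definition of $\encoding$, and it is preserved because both activation functions keep $\Vv{\xin}$ whenever $\Vt{x}\in\{\tinput,\tconst\}$. Here we use that the $\charfin{T}{t}$ are realised by the Lagrange polynomials of \cref{thm:charfin_field}, so every activation function is a polynomial and lies in $\FCirc{R}$. Correctness of the simulation then follows from the invariant for $k=K$, applied to the output gates.

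For the first layer of a copy, we reuse the argument from \cref{thm:average_FAC}. For fixed $\vecx$, we have $\attention_E(\vecx,\vecy)=\Vp{x}^2-(\Vs{y}-\Vp{x})^2$. Since the indices are integers and $R$ is $\Integers$-ordered, this is maximal exactly when $\Vs{y}=\Vp{x}$. These $\vecy$ are the $m\ge1$ vectors carrying the value of the predecessor $u$. Average pooling therefore returns that value, because all of them agree by the invariant. Similarly, the $\attention_V$ head attends exactly to the $k$ incoming edges of the successor gate. Its $\Vi{}$-component is then $\frac{k+1}{2}$, so $2\Vi{z}-1=k$.

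For the second layer, the $\pooling_+$ head with $\attention_V$ yields $\frac1k\sum_j \Vv{}$ over the incoming edges. The correction factor $k$, already applied to those $\Vv{}$ in $\activation_E$ for $\tplus$ and $\toutput$ gates, therefore gives the exact sum. For multiplication gates, which have fan-in exactly $2$ in a semi-unbounded circuit, we analyse $\attention_{B(n)}$ for $n\in\{1,2\}$. It is a sum of two terms, $\Vs{x}^2-(\Vs{y}-\Vs{x})^2\le\Vs{x}^2$ and $n^2-(\Vi{y}-n)^2\le n^2$. Each attains its bound only at $\Vs{y}=\Vs{x}$, respectively $\Vi{y}=n$, and exactly one vector achieves both. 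Hence the maximum is unique and average attention returns exactly the $n$-th predecessor value. Here the uncorrected $\Vv{y}$ is used, which is why $\activation_E$ omits the factor for $\ttimes$ gates. The product $\Vv{x^1}\mult\Vv{x^2}$ is then formed in $\activation_V$. The dot-product form of $\attention_{B(n)}$, which uses $\Vone{}$, $\Vssq{}$ and $\Visq{}$, shows that it lies in $\dpa$.

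The main obstacle is the case where the joint maximum is not attained. This happens when some gate has $\Vs{x}$ but no edge with $\Vi{y}=n$, for instance at $\tplus$ or $\toutput$ gates, or for vectors of input and constant gates. Then the attended vectors are arbitrary, and the values they carry need not be well defined. We handle this by observing that the results of the $B(n)$ heads are multiplied by $\charfin{T}{\ttimes}(\Vt{\xin})=0$ in exactly these cases. Likewise, any garbage produced by the $\attention_E$ head at input and constant vectors is killed by the coefficient $0$ in $\activation_E$. We must also check that gates of depth greater than $k$, whose values may still be wrong, only feed gates of depth greater than $k+1$. This preserves the invariant. Commutativity is not needed, since only binary multiplication in a fixed order occurs.
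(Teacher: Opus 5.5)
Your proposal is correct and follows the paper's proof. It uses the same depth induction as for \cref{thm:average_FAC}, with the heads using $\attention_{B(1)}$ and $\attention_{B(2)}$ selecting the unique first and second incoming edge of each $\times$ gate, whose uncorrected values are then multiplied in $\activation_V$; you additionally make explicit points the paper leaves implicit, namely why the maximum of $\attention_{B(n)}$ is unique, why the arbitrary outputs of these heads at non-$\times$ vectors are harmless (they are multiplied by $\charfin{T}{\ttimes}(\Vt{\xin}) = 0$), and why commutativity of $R$ is no longer needed.
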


\begin{example} \label[example]{example_sim_semi}
	\begin{table}
		\caption{Attention values $\attention_{B(2)}(\vecx, \vecy)$ for \cref{example_encoding_unbounded} 
			where $\vecx$ is given by column and $\vecy$ by row. The max of each column is bold. 
		}
		\label{tab:example_sim_semi_attention_B2}
		\vskip 0.1in \centering
		\begin{tabular}{VVVVVVSSV}
			\toprule \attention_{B(2)} & \vecx_1 & \vecx_2 & \vecx_3 & \vecx_4 & \vecx_5 & 
				\hspace{-6pt} \vecx_6, \dots, \vecx_9 \hspace*{-4pt} & 
				\hspace{-4pt} \vecx_{10}, \vecx_{11} \hspace*{-6pt} & \vecx_{12} \\ \midrule
			\vecx_1 	&   \mathbf{1} &  3 &   5 &  7 &  9 & 11 & 13 & 15 \\
			\vecx_2 	&   0 &  \mathbf{4} &   8 & 12 & 16 & 20 & 24 & 28 \\
			\vecx_3 	&  -3 &  3 &   \mathbf{9} & 15 & 21 & 27 & 33 & 39 \\
			\vecx_4 	&  -8 &  0 &   8 & \mathbf{16} & 24 & 32 & 40 & 48 \\
			\vecx_5 	& -15 & -5 &   5 & 15 & 25 & 35 & 45 & 55 \\
			\vecx_6 	& -21 & -9 &   3 & 15 & 27 & 39 & 51 & 63 \\
			\vecx_7 	& -20 & -8 &   4 & \mathbf{16} & \mathbf{28} & \mathbf{40} & 52 & 64 \\
			\vecx_8 	& -21 & -9 &   3 & 15 & 27 & 39 & 51 & 63 \\
			\vecx_9 	& -24 & -1 &   0 & 12 & 24 & 36 & 48 & 60 \\
			\vecx_{10}	& -32 & -1 &  -4 & 10 & 24 & 38 & 52 & 66 \\
			\vecx_{11}	& -31 & -1 &  -3 & 11 & 25 & 39 & \mathbf{53} & \mathbf{67} \\
			\vecx_{12}	& -45 & -2 & -13 &  3 & 19 & 35 & 51 & \mathbf{67} \\
			\bottomrule
		\end{tabular}
	\end{table}

	For the circuit from \cref{example_encoding_unbounded}, 
	the attention function $\attention_{B(2)}$ produces the values in 
	\cref{tab:example_sim_semi_attention_B2}. 
	Here we see that our method of attention only works for gates that have a second predecessor, 
	as for some others, we get multiple maximal scores. 
	This is not an issue, because for other gates the values from the corresponding attention head 
	are never used in the activation functions. 
\end{example}

If we only look at bounded fan-in circuits, we can apply the same construction to addition and output gates and 
also remove the attention head for predecessor counting. 
As this makes all heads attend to at most one relevant vector, 
we can even replace average with leftmost or rightmost attention. 

\begin{restatable}{theorem}{averagen} \label{thm:average_FNC}
	Let $R$ be a $\Integers$-ordered ring extension of $\Rational$. 
	Then for each $K \in \Natural$ there is a transformer 
	in $\transform{R}{\FCirc{R}}{\FCirc{R}}{\dpa}{\WS{\hardleft}}$ 
	of dimension $8$ with $2K$ layers 
	that simulates any bounded fan-in $R$-circuit of depth $\le K$ encoded with $\encoding$.
\end{restatable}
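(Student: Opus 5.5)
We adapt the construction behind \cref{thm:average_FSAC} and verify by induction on the layer pairs. For bounded fan-in every non-leaf gate has one or two predecessors: an output gate has exactly one, and $+$ and $\times$ gates have one or two, with edge indices in $\{1,2\}$. We drop the counting head that used $\attention_V$ and $\pooling_+$. Every arithmetic gate is then handled by the two heads $\attention_{B(1)}$ and $\attention_{B(2)}$, and the first layer of each pair keeps the head $\attention_E$. All heads use $\WS{\hardleft}$.

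The activation functions are redefined as follows.
\begin{noseptemize}
	\item $\Vv{\activation_E(\vecx, \vecy)}$ equals $\Vv{x}$ for input and constant vectors and $\Vv{y}$ for all other types. No correction factor is needed because leftmost attention does not rescale.
	\item $\Vv{\activation_V}$ returns $\Vv{\xin}$ for input and constant types and $\Vv{x^1}$ for output gates.
	\item For $\tplus$ it returns $\Vv{x^1} + \zero$-free selection of $\Vv{x^2}$, and for $\ttimes$ it returns $\Vv{x^1}\mult\Vv{x^2}$.
\end{noseptemize}
To handle fan-in one without $\zero$, we store in each vector an extra component, or reuse $\Vi{}$, containing the fan-in $k\in\{1,2\}$ of its successor gate. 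We then use a Lagrange polynomial from \cref{thm:charfin_field} in $k$ to select between the one- and two-predecessor formulas. Alternatively, we compute $k$ in a preliminary layer as the maximum $\Vi{}$ among vectors with the same $\Vs{}$, via leftmost attention with score $\attention_V + \Vi{y}$ suitably weighted. The $\charfin{T}{t}$ selectors are polynomials by \cref{thm:charfin_field}, since $R\supseteq\Rational$ is a field on these constants. Hence all activations lie in $\FCirc{R}$, the input embedding $\finput(\vecx) = (\Vs{x},\dots,\Vv{x},1,\Vs{x}^2,\Vi{x}^2)$ is polynomial, and all attention functions are dot-products in $\dpa$, as shown in the text. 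The dimension stays $8$ if the fan-in information is packed into existing unused slots, for example $\Vv{}$ in the first layer; otherwise we accept the small overhead.

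The key correctness claim concerns an edge vector $\vecx$ whose successor gate has a predecessor with edge index $m$. The score $\attention_{B(m)}(\vecx,\vecy) = \Vs{x}^2+m^2-(\Vs{y}-\Vs{x})^2-(\Vi{y}-m)^2$ is uniquely maximal at the single vector $\vecy$ with $\Vs{y}=\Vs{x}$ and $\Vi{y}=m$. Uniqueness holds because distinct edges into the same gate have distinct $\alpha$-values and the encoding has exactly one vector per edge. The $\Integers$-ordering guarantees that the integer comparison is preserved in $R$. Similarly, $\attention_E$ is maximal exactly on vectors $\vecy$ with $\Vs{y}=\Vp{x}$. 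For a leaf predecessor there is exactly one such vector. For an inner predecessor all such vectors carry the same gate value after the previous $\activation_V$, so the leftmost choice is harmless. This is the one place where the argument differs from the average-attention case, and we expect it to be the main obstacle: we must establish the invariant that after each layer pair all incoming-edge vectors of a gate hold that gate's value once its depth is at most the current round. Ties for heads whose values are discarded, as in \cref{example_sim_semi}, are irrelevant because the $\charfin{T}{t}$ factors multiply them by $0$.

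The induction proceeds as follows. After $2\ell$ layers, every vector whose successor gate $g$ has depth at most $\ell$ stores $f_{C,g}(\vecx)$ in $\Vv{}$. Input and constant vectors are fixed points of both activations. In round $\ell+1$, the $\attention_E$ layer copies predecessor values onto edges, and the $\attention_{B(1)}$ and $\attention_{B(2)}$ layer combines them. Values already correct stay correct, since the recomputation reproduces the same value. After $2K$ layers the output-edge vectors hold $f_C(\vecx)$.
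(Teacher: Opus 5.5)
Your core construction is the paper's. You drop the counting head and keep $\attention_E$ in the first layer of each pair and $\attention_{B(1)}, \attention_{B(2)}$ in the second. Your $\activation_E$ copies $\Vv{y}$ without a correction factor, and your $\activation_V$ returns $\Vv{x^1}$, $\Vv{x^1}+\Vv{x^2}$ or $\Vv{x^1}\mult\Vv{x^2}$ according to the gate type. Leftmost pooling is harmless because each relevant head either has a unique maximiser or ties only among vectors carrying the same value. Your uniqueness argument for $\attention_{B(m)}$ and your tie argument for $\attention_E$ are exactly those of the paper.

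What does not work is the fan-in-one detour. First, it is unnecessary. In a bounded fan-in circuit, $+$ and $\times$ are the binary basis functions, so a gate labelled with them has exactly two incoming edges, with indices $1$ and $2$. The paper relies on this in the proof of \cref{thm:average_FSAC}, and $\activation_V$ can simply return $\Vv{x^1}+\Vv{x^2}$.

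Second, none of the implementations you offer stays within the stated bounds:
\begin{itemize}
\item Overwriting $\Vi{x}$ destroys the edge index that $\attention_{B(1)}$ and $\attention_{B(2)}$ need in every round.
\item $\Vv{x}$ of an edge vector is overwritten by $\activation_E$ in every first layer and again by $\activation_V$, so it cannot carry the fan-in $k$ across rounds.
\item A preliminary layer gives $2K+1$ layers.
\item An extra component gives dimension $9$, contradicting ``dimension $8$''.
\end{itemize}

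Delete that paragraph and the proof goes through as in the paper. If you really want to allow fan-in one, compute $k$ on the fly with an additional leftmost head in each second layer. The score $4(2\Vs{x}\Vs{y}-\Vone{x}\Vssq{y})+\Vone{x}\Vi{y}$ lies in $\dpa$ and returns the incoming edge with the largest index, so it needs no storage and no extra layer.
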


\subsection{Simulating Circuit Extensions}

Using the attention functions $\attention_{B(3)}, \attention_{B(4)}, \dots$ 
we can fetch the values from an arbitrary number of predecessors. 
This works as long as the fan-in of the gates is bounded by the number of attention heads. 
So for any set $B$ of functions $f \colon R^k \to R$, 
we can simulate circuits over the basis $(R, B)$ 
if we allow activation functions from $\FCirc{R}[B]$. 
This requires an expansion of the encoding $\encoding$ to include the gate types for $B$. 

\begin{restatable}{theorem}{averagee} \label{thm:average_extended}
	Let $R$ be a $\Integers$-ordered ring extension of $\Rational$, 
and	$B$ a finite set of functions $f \colon R^k \to R$. 
	Then for each $K \in \Natural$ there is a transformer 
	in $\transform{R}{\FCirc{R}}{\FCirc{R}[B]}{\dpa}{\WS{\avg}}$ 
	with $2K$ layers 
	that simulates any semi-unbounded fan-in $R$-circuit 
	of depth $\le K$ encoded with $\encoding$, 
	which also uses functions from $B$.  
\end{restatable}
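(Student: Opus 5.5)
We extend the two-layer construction behind \cref{thm:average_FSAC}, treating the gates labelled by functions from $B$ exactly like the binary multiplication gates there. Let $m = \max\{k \mid f \colon R^k \to R \in B\}$, which exists because $B$ is finite, and set $m' = \max(m, 2)$. We extend the type set $\Types$ by a fresh constant $t_f \in R$ for every $f \in B$, pairwise distinct and distinct from $\tconst, \tinput, \toutput, \tplus, \ttimes$. The encoding $\encoding$ then writes $t_f$ into the $\Vt{x}$-component of every incoming edge of an $f$-gate, with $\alpha$ numbering these edges $1, \dots, k$ as in the definition of circuits. The enlarged set $\Types$ is still finite and $R$ extends $\Rational$. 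Hence \cref{thm:charfin_field} gives polynomials $\fpoly{\Types}{t} \in \FCirc{R}$ that agree with $\charfin{\Types}{t}$ on $\Types$, and we use these in all activation functions.

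The layers keep dimension $8$ and the same input embedding. In the second layer of each pair, we replace the heads using $\attention_{B(1)}, \attention_{B(2)}$ by $m'$ heads using $\attention_{B(1)}, \dots, \attention_{B(m')}$, all with $\pooling_+ \in \WS{\avg}$. The argument about maximal values from the semi-unbounded case carries over verbatim for every $j$. Whenever the gate $\Vs{x}$ has a $j$-th incoming edge, $\attention_{B(j)}(\vecx, \cdot)$ attains its maximum exactly at the unique encoding vector $\vecy$ with $\Vs{y} = \Vs{x}$ and $\Vi{y} = j$. This uses that $R$ is $\Integers$-ordered, so that $\Vs{x}^2 - (\Vs{y} - \Vs{x})^2 + j^2 - (\Vi{y} - j)^2$ is strictly maximised by the two squared differences vanishing. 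Average attention over a single vector returns that vector, so the $j$-th head delivers the value on the $j$-th incoming edge. The new activation function adds, for each $f \in B$ of arity $k$, the summand
\[
\fpoly{\Types}{t_f}(\Vt{\xin}) \mult f\big(\Vv{x^{1}}, \dots, \Vv{x^{k}}\big)
\]
to $\Vv{\activation_V}$ and leaves all other components unchanged. This function is a circuit over $(R, \{+_*, \mult_*\} \cup B)$ of constant size, so it lies in $\FCirc{R}[B]$. In $\activation_E$, the gate types $t_f$ are grouped with $\ttimes$. That is, for every $f \in B$ we add the term $\fpoly{\Types}{t_f}(\Vt{x}) \mult \Vv{y}$, uncorrected by the factor $2\Vi{z} - 1$, because each edge still attends only to the edges entering its predecessor gate. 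All remaining functions are taken unchanged from \cref{thm:average_FSAC}.

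Correctness follows by the same induction over the depth as in \cref{thm:average_FSAC}. The invariant is that after $2k$ layers every vector belonging to a gate of depth at most $k$ carries that gate's value in $\Vv{}$. The only genuinely new point, and the step I expect to need the most care, concerns the heads $\attention_{B(j)}$ for gates of fan-in smaller than $j$, including gates of type $+$, output or input. There the maximum may be attained by several, possibly unrelated, vectors, as already seen in \cref{example_sim_semi}. We must check that such outputs never enter a value that is used. This holds because each summand of $\activation_V$ is multiplied by $\fpoly{\Types}{t}(\Vt{\xin})$, and for $\Vt{\xin} \in \Types$ this factor equals $1$ for exactly one type and $0$ for all others. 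An $f$-gate of arity $k$ reads only heads $1, \dots, k$, all of which are well defined for it. Junk values are finite ring elements, so they are annihilated by the factor $0$; this is why no partiality issue arises. Since the number of heads and the dimension are independent of the circuit, a single transformer with $2K$ layers simulates every such circuit of depth at most $K$.
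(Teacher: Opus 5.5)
Your proposal is correct and follows the paper's proof: you extend $T$ by fresh type constants $t_f$, add heads $\attention_{B(3)}, \dots, \attention_{B(m)}$ with $\pooling_+$ alongside $\attention_{B(1)}, \attention_{B(2)}$, and add the summands $\charfin{T}{t_f}(\Vt{\xin}) \mult f(\Vv{x^{1}}, \dots, \Vv{x^{k}})$ to $\activation_V$, exactly as the paper does. You also make explicit that $\activation_E$ must copy $\Vv{y}$ for edges of type $t_f$ without the correction factor $2\Vi{z} - 1$, as it does for $\ttimes$; the paper's proof leaves this unstated, but it is needed, since with $\activation_E$ unchanged those edges would carry the value $0$ into the second layer.
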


The $\avg$ function is non-continuous and we can use $\WS{\avg}$ to compute the $\sign$ function. 
By constructing an attention function that attends 
to one specific vector if $\Vv{x} < 0$ and to two vectors if $\Vv{x} = 0$, 
we can use the difference in the resulting value to compute the $\zero$ function. 
If we add another dimension containing only values $0$ and $1$, 
we can construct an attention head that distinguishes between positive and negative values. 
Together with $\zero$ this is sufficient to compute $\sign$. 

\begin{theorem} \label{thm:simulate_sign}
	Let $R$ be a $\Integers$-ordered ring extension of $\Rational$. 
	Then for each $K \in \Natural$ there is a transformer 
	in $\transform{R}{\FCirc{R}}{\FCirc{R}}{\dpa}{\WS{\avg}}$ 
	of dimension $9$ with $2K$ layers 
	that simulates any $\FSAC{R}{0}[\sign]$-circuit of depth $\le K$ encoded with $\encoding$.
\end{theorem}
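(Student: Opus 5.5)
The plan is to extend the two-layer block behind \cref{thm:average_FSAC} with a new gate type $t_{\sign} \in \Types$ and a few extra attention heads in the second layer of each block. In the first layer of a block, a $\sign$ gate is handled like any other non-input gate. Its single incoming edge vector uses $\attention_E$ to fetch the value of its predecessor into $\Vv{x}$. The correction factor $2\Vi{z}-1$ equals $1$ because the fan-in is one, so $\activation_E$ needs only an additional summand with $\charfin{T}{t_{\sign}}(\Vt{x})$. In the second layer, the edge vector of a $\sign$ gate must replace $\Vv{x}$ by $\sign(\Vv{x})$. All remaining work consists in computing $\sign$ of a vector's own component using only dot-product attention and $\avg$ pooling.

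For this we add a ninth dimension $\Vbin{x} \in \{0,1\}$, set by the input embedding. For example, take $\Vbin{x} = \charfin{T}{\toutput}(\Vt{x})$, realised as the polynomial $\fpoly{T}{\toutput}$ from \cref{thm:charfin_field}, so that $\finput \in \FCirc{R}$. We then use two extra heads with the dot-product attention functions $\attention_+(\vecx,\vecy) = \Vv{x}\mult\Vbin{y}$ and $\attention_-(\vecx,\vecy) = -\Vv{x}\mult\Vbin{y}$. Both are bilinear, so they lie in $\dpa$. Both heads use $\pooling_+ \in \WS{\avg}$. Let $m$ be the number of encoding vectors with $\Vbin{y}=1$ and $N$ the total number of vectors.
\begin{itemize}
\item If $\Vv{x}>0$, the first head attends exactly to the vectors with $\Vbin{y}=1$ and the second to those with $\Vbin{y}=0$. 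The $\Vbin{}$-components of the pooled vectors are therefore $1$ and $0$.
\item If $\Vv{x}<0$, the roles swap, giving $0$ and $1$.
\item If $\Vv{x}=0$, all scores are $0$, both heads average over everything, and both yield $m/N$.
\end{itemize}
Hence the difference of the two pooled $\Vbin{}$-components is exactly $\sign(\Vv{x})$. The activation $\activation_V$ gets the extra summand $\charfin{T}{t_{\sign}}(\Vt{\xin}) \mult (\Vbin{z^+} - \Vbin{z^-})$, which is still in $\FCirc{R}$. The order used here is the order of $R$ itself; the $\Integers$-ordering and the field structure of $R \supseteq \Rational$ are only needed as in \cref{thm:average_FSAC}, both for the $\avg$ inverses and for the interpolation polynomials.

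The correctness argument then follows the induction on depth from \cref{thm:average_FSAC}. After $2k$ layers, every edge vector whose successor has depth $\le k$ carries the correct gate value. The new heads only affect vectors of type $t_{\sign}$, because $\activation_V$ masks them with $\charfin{T}{t_{\sign}}$; the old heads are masked away for $\sign$ gates in the same way. The dimension count is $8+1=9$, and the layer count stays at $2K$.

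The step I expect to be the main obstacle is the degenerate case of the $\Vbin{}$ flag. The argument needs both flag values to occur in the encoding, i.e.\ $0<m<N$; otherwise one head may attend to vectors whose flag differs from what is assumed. Every circuit with an output has an output edge, so $m \ge 1$, but a circuit consisting only of output edges could have $m=N$. I would first try to choose the flag so that both values are guaranteed to occur, for instance $\Vbin{x}=1$ exactly for output edges, and argue that any circuit containing a $\sign$ gate also has a non-output edge. If that fails, I would instead add a dummy constant gate to the encoding convention.
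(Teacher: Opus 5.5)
Your proof is correct, but you handle the zero case differently from the paper. The paper uses a single direction head $\attention_{\sign}(\vecx,\vecy) = \Vv{x}\mult\Vbin{y}$, which is your $\attention_+$. It then builds the $\zero$ function separately with two further heads $\attention_{z\pm}$. These single out one encoding vector each for the gates numbered $1$ and $2$, using the $\Vs{}$ and $\Vi{}$ components. The paper finally sets $\sign(\Vv{x}) = (1-\zero(\Vv{x}))\mult(2\Vbin{u}-1)$.

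You instead add the mirrored head $-\Vv{x}\mult\Vbin{y}$ and take the difference of the two pooled flags. For $\Vv{x}\neq 0$ the two heads split the sequence into its flag-$1$ and flag-$0$ parts. For $\Vv{x}=0$ both heads collapse to the same uniform average $m/N$, so the difference vanishes automatically.

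What each route buys:
\begin{itemize}
\item Yours uses two heads instead of three and needs no $\zero$ gadget.
\item Yours makes no assumptions about which gate indices occur in the encoding, beyond the nondegeneracy condition both proofs share (both flag values occur).
\item The paper's route, in exchange, produces $\zero$ as a reusable by-product.
\end{itemize}

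On that shared condition, your argument is more direct than the paper's remark that an output gate forces a non-output gate to exist. A $\sign$ gate's own incoming edge is a flag-$0$ vector, any output gate supplies a flag-$1$ vector, and without outputs there is nothing to simulate. So the condition holds exactly where the $\sign$ heads are used.
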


\section{Conclusion}

In this paper, we established a correspondence between 
generalized average attention transformers using circuits as activation functions and 
the arithmetic circuit class $\FSAC{R}{0}[\sign]$. 
Over any semiring $R$, these transformers only compute functions in $\FSAC{R}{0}[\sign]$ and 
if $R$ is a $\Integers$-ordered ring extension of $\Rational$, 
there is a transformer for each $\FSAC{R}{0}[\sign]$-family that can simulate all of its circuits encoded with $\encoding$. 
Unfortunately, this is not possible while only allowing feed-forward networks as activation functions, 
because those cannot perform multiplication of two inputs. 

This does not imply that any $\FSAC{R}{0}[\sign]$-function is also computed by a transformer. 
In fact, this would be impossible, as circuit families are a non-uniform model of computation. 
If we used transformers to generate longer sequences of vectors, similar to how they are used in natural language processing, 
we could also compute the encoding of a circuit from just the input vector. 
Our result would then lead to an equivalence between circuit families that are uniformly described by a transformer and 
the functions computed by such transformers. 

\section*{Impact Statement}

This paper presents work whose goal is to advance the field of machine learning theory. 
There are many potential societal consequences of our work, 
none of which we feel must be specifically highlighted here.

\bibliography{references.bib}

\newpage

\appendix

\section*{Appendix}
\lagrange* \label{app:lagrange}
\begin{proof}
	The Lagrange polynomial 
	\[\fpoly{A}{a} (x) = \prod_{b \in A \setminus \{a\}} \frac{x - b}{a - b}\]
	has the desired property. 
	It can be computed by a circuit using constants from 
	$A \cup \{(a - b)^{-1} \mid a, b \in A, a \neq b\}$. 
\end{proof}

\subsection*{Proofs of \cref{sec:simulating_circuits}}

\newcommand{\sigh}{c}

\generalized*
\begin{proof}
	Let $C$ be an unbounded fan-in arithmetic circuit over $R$ of depth $k$ with $m$ inputs and $\vecu \in R^m$. 
	We assume that $R$ is commutative to avoid values of multiplication gates being dependent on the order of the input vector. 
	Let $\ell\colon (R^5)^* \to (R^5)^*$ be the function computed by the two layers described in \cref{sec:sim_generalized}. 
	We construct a transformer by simply copying those layers $k$ times and 
	show that in its output, each vector $\vecx$ 
	has $\Vv{x} = f_{C, g}(\vecu)$ where $g$ is the gate with index $\Vs{x}$. 
	This is done by induction over path lengths in $C$. 
	
	The input to the first layer is equal to $\encoding(C, \vecu)$, 
	because there is no positional embedding. 
	Therefore, all input and constant gates already have the correct value. 

	Assume that each vector $\vecx$ in $\ell^k (\encoding(C, \vecu))$ whose corresponding gate $g$ 
	has no path of length $> k$ from an input or constant gate has $\Vv{x} = f_{C, g}(\vecu)$. 
	Now let $\vecx$ be any vector in $\ell^k (\encoding(C, \vecu))$ whose corresponding gate $g$ 
	has no such path of length $> k+1$ and 
	let $\vecy$ be the vector corresponding to $\vecx$ in $\ell^{k+1} (\encoding(C, \vecu))$. 
	\begin{noseptemize}
		\item If $g$ is an input or constant gate, we get $\vecy = \vecx = f_{C, g}(\vecu)$, 
			because $\vecy = \activation_V(\activation_E(\vecx, \vecz^1, \vecz^2), \vecz^3, \vecz^4) = \vecx$ 
			for any $\vecz^1, \dots, \vecz^4 \in R^5$. 
		\item If $g$ is an output or addition gate, let 
			$\vecz^E$ be the result of the attention head using $\attention_E$ on $\vecx$, 
			$\vecx' = \activation_E(\vecx, \vecz^1, 0)$ and 
			$\vecz^+$ be the result of the attention head using $\attention_V$ and $\pooling_+$ on $\vecx'$. 
			Then $\attention_E$ makes $\vecx$ attend to exactly those vectors $\overline{w}$ which corresponds to 
			the predecessor $p$ of $g$ that has index $\Vp{x}$. 
			At least one such vector must exist, if the input was a circuit encoded with $\encoding$. 
			From our assumption, we get $\Vv{x'} = \Vv{w} = f_{C, p}(\vecu)$. 
			In the second layer, $\attention_V$ makes $\vecx'$ attend to all vectors that correspond to $g$ including itself. 
			Therefore, \[\Vv{z^+} = \sum_{p \in \neighbourin(g)} f_{C, p}(\vecu) = f_{C, g}(\vecu) .\] 
			Then for any $\vecz^\times$, $\vecy = \activation_V(\vecx', \vecz^+, \vecz^\times)$ has 
			$\Vv{y} = \Vv{z^+} = f_{C, g}(\vecu)$ and $y_j = x'_j = x_j$ for $j \in \{\spit\}$. 
		\item If $g$ is a multiplication gate, we analogously get 
			\[\Vv{y} = \prod_{p \in \neighbourin(g)} f_{C, p}(\vecu) .\]
	\end{noseptemize}
	So by induction we get $\Vv{x} = f_{C, g}(\vecu)$ for all $\vecx$ corresponding to $g$ 
	in $\ell^K (\encoding(C, \vecu))$ if $\depth(C) \le K$. 
	Since $\ell^K (\encoding(C, \vecu))$ is the output of a transformer which applies the two layers $K$ times, 
	this proves \cref{thm:generalized_id}. 
\end{proof}

\average*
\begin{proof}
	This proof is very similar to the one of \cref{thm:generalized_id}. 
	For any $\vecx$, the two new attention functions $\attention_E(\vecx, \vecy)$ and $\attention_V(\vecx, \vecy)$ 
	reach their maximum values $\Vp{x}^2$ and $\Vs{x}^2$ only when the previous attention functions output $1$. 
	If the input is a circuit encoded with $\encoding$, 
	such a $\vecy$ always exists for output, $+$ and $\times$ gates. 
	So in those cases, the attention heads attend to exactly the same vectors as before and 
	because of the scaling factor, the component $\Vv{z}$ of their output does not change. 
	For input and constant gates, the output of the attention heads is ignored. 
	The remainder of the proof is the same as for \cref{thm:generalized_id}. 
\end{proof}

\averages*
\begin{proof} \label{pro:average_FSAC}
	This proof is very similar to the one of \cref{thm:average_FAC}. 
	All steps for input, constant, output and $+$ gates are the same. 
	In the induction step, if $g$ is a multiplication gate, 
	$\attention_E$ works in the same way as before, 
	but the resulting value is not scaled by a factor in $\activation_E$. 
	Then the attention functions $\attention_{B(1)}$ and $\attention_{B(2)}$ 
	attend only to the vectors $\vecy^1$ and $\vecy^2$ corresponding to $g$ that have $\Vi{y^1} = 1$ and $\Vi{y^2} = 2$. 
	If the input is a semi-unbounded fan-in circuit encoded with $\encoding$, 
	all $\times$ gates must have exactly two predecessors, so these vectors must exist. 
	For the resulting vector $\vecz = \activation_V(\vecx', \vecx^+, \vecy^1, \vecy^2)$ we get 
	$\Vv{z} = \Vv{y^1} \mult \Vv{y^2} = f_{C, g}(\vecu)$. 
\end{proof}

\averagen*
\begin{proof} \label{pro:average_FNC}
	We redefine the activation functions: 
	\begin{align*}
		\Vv{\activation_V(\xin, \vecx^{1}, \vecx^{2})} = \quad&\\
		\big(\charfin{T}{\tinput}(\Vt{\xin}) + \charfin{T}{\tconst}(\Vt{\xin})\big) &\mult \Vv{\xin} + \\
		\charfin{T}{\toutput}(\Vt{\xin}) &\mult \Vv{x^{1}} + \\
		\charfin{T}{\tplus}(\Vt{\xin}) &\mult (\Vv{x^{1}} + \Vv{x^{2}}) + \\
		\charfin{T}{\ttimes}(\Vt{\xin}) &\mult (\Vv{x^{1}} \mult \Vv{x^{2}}) 
	\end{align*}
	\begin{multline*}
		\Vv{\activation_E(\vecx, \vecy, \vecz)} = 
		\big(\charfin{T}{\tinput}(\Vt{x}) + \charfin{T}{\tconst}(\Vt{x})\big) \mult \Vv{x} + \\
		\big(\charfin{T}{\toutput}(\Vt{x}) + \charfin{T}{\tplus}(\Vt{x}) + \charfin{T}{\ttimes}(\Vt{x})\big) 
		\mult \Vv{y} 
	\end{multline*}
	The new two layer transformer is now 
	\vspace{-0.5\baselineskip}
	\begin{multline*}
		\Big(R, 8, R^5, \finput, 0, \attention_E, 0, \attention_{B(1)}, \attention_{B(2)}, 
		\\
		\pooling_+, 0, \pooling_+, \pooling_+, \activation_E, \activation_V \Big).
	\end{multline*}
	We can also replace $\pooling_+$ with the function 
	\[\pooling_{\hardleft}(X, \veca) = \sum_{i=1}^{\vert X \vert} \hardleft(\veca)_i \mult X_i \in \WS{\hardleft} .\]
	The proof for $\pooling_+$ follows the same steps for addition 
	as the proof of \cref{thm:average_FSAC} does for multiplication. 
	In addition, the result from $\attention_{B(2)}$ is ignored for output gates, 
	but $\attention_{B(1)}$ still works in the same way. 

	Because $\attention_{B(1)}$ and $\attention_{B(2)}$ attend to exaclty one vector in all relevant cases, 
	using $\pooling_{\hardleft}$ makes no difference for the corresponding attention heads. 
	The third head might attend to multiple vectors at a time, 
	but, if the induction hypothesis holds, all of those share the same $\Vv{y}$. 
	Since $\Vv{y}$ is now the only component used in $\activation_E$, 
	$\pooling_{\hardleft}$ produces the same results as $\pooling_+$. 
\end{proof}

\averagee*
\begin{proof}
	Let $m$ be the maximum arity of functions in $B$. 
	We modify the transformer from \cref{thm:average_FSAC} 
	by adding attention heads for $\attention_{B(3)}, \dots, \attention_{B(m)}$ using $\pooling_+$ 
	in the same way as for $\attention_{B(1)}$ and $\attention_{B(2)}$. 
	If the encoding uses constants $T_B = \{t_b \mid b \in B\}$ for the new functions, 
	we can define $T' = T \cup T_B$ and use the functions $\charfin{T'}{t}$ to define 
	$\activation_V \colon (R^d)^{m+2} \to R^d$ as 
	\begin{align*}
		\Vv{\activation_V(\xin, \xplus, \vecx^{1}, \dots, \vecx^{m})} = \quad&\\
		\big(\charfin{T'}{\tinput}(\Vt{\xin}) + \charfin{T'}{\tconst}(\Vt{\xin})\big) &\mult \Vv{\xin} + \\
		\big(\charfin{T'}{\toutput}(\Vt{\xin}) + \charfin{T'}{\tplus}(\Vt{\xin})\big) &\mult \Vv{\xplus} + \\
		\charfin{T'}{\ttimes}(\Vt{\xin}) &\mult (\Vv{x^{1}} \mult \Vv{x^{2}}) + \\
		\sum_{b \in B} \charfin{T'}{t_b} &\mult b(\Vv{x^{1}}, \dots, \Vv{x^{k_b}})
	\end{align*}
	where $k_b$ is the arity of $b$. 
\end{proof}

\begin{proof}[Proof of \cref{thm:simulate_sign}]
	To compute the $\zero$ function we use two attention functions $\attention_{z+}, \attention_{z-} \in \dpa$
	defined as 
	\begin{alignat*}{2}
		\attention_{z\pm}(\vecx, \vecy) &= \pm \Vv{x} \mult \Vs{y} &&+ 
			\frac{9}{4} - (\Vs{y} - \frac{3}{2})^2 + \frac{1}{4} - (\Vi{y} - \frac{1}{2})^2 \\
		&= \pm \Vv{x} \mult \Vs{y} &&+ \frac{3}{2} \Vone{x} \mult \Vs{y} - \Vone{x} \mult \Vssq{y} \\
		&	&&+ \frac{1}{2} \Vone{x} \mult \Vi{y} - \Vone{x} \mult \Visq{y} .
	\end{alignat*}
	Because every non-empty circuit has at least two nodes and 
	every valid encoding of a circuit has only integer values in $\Vs{x}$ and $\Vi{x}$, 
	the term $-(\Vs{y} - \frac{3}{2})^2$ reaches its maximum only when $\Vs{y} = 1$ or $\Vs{y} = 2$ and 
	the term $-(\Vi{y} - \frac{1}{2})^2$ reaches its maximum only when $\Vi{y} = 0$ or $\Vi{y} = 1$. 
	There are two cases for each of the two gates with indices $1$ and $2$: 
	\begin{itemize}
		\item If the gate is an input or constant gate, 
			it has only one corresponding vector $\vecy$ with $\Vi{y} = 0$. 
		\item Otherwise, the gate may have multiple corresponding vectors, 
			but it has exactly one with $\Vi{y} = 1$. 
	\end{itemize}
	In both cases, there are exactly two vectors for which the term 
	$- (\Vs{y} - \frac{3}{2})^2 - (\Vi{y} - \frac{1}{2})^2$ is maximal. 
	Let those be $\vecy$ and $\vecz$ such that $\Vs{y} = 1$ and $\Vs{z} = 2$. 
	Then for any vector $\vecx$, we get 
	$\attention_{z+}(\vecx, \vecz) - \attention_{z+}(\vecx, \vecy) = (\Vs{z} - \Vs{y}) \mult \Vv{x} = \Vv{x}$ 
	and $\attention_{z-}(\vecx, \vecz) - \attention_{z-}(\vecx, \vecy) = -\Vv{x}$. 
	When these attention functions are used in an attention head with $\pooling_+$ 
	resulting in vectors $\vecu^+$ and $\vecu^-$, there are three cases for $\Vs{u^+}$ and $\Vs{u^-}$: 
	\begin{itemize}
		\item If $\Vv{x} = 0$, $\vecx$ attends only to $\vecy$ and $\vecz$. 
			Therefore, $\Vs{u^+} = \Vs{u^-} = \frac{1 + 2}{2} = \frac{3}{2}$. 
		\item If $\Vv{x} < 0$, $\attention_{z+}$ makes $\vecx$ attend only to $\vecy$, 
			because $\attention_{z+}(\vecx, \vecy) > \attention_{z+}(\vecx, \vecz)$. 
			Therefore, $\Vs{u^+} = 1$. 
		\item If $\Vv{x} > 0$, $\attention_{z-}$ makes $\vecx$ attend only to $\vecy$, 
			because $\attention_{z-}(\vecx, \vecy) > \attention_{z-}(\vecx, \vecz)$. 
			Therefore, $\Vs{u^-} = 1$. 
	\end{itemize}
	Any other vector $\overline{w}$ cannot have an effect here, 
	because we get $-\Vv{x} \mult \Vs{w} \le -\Vv{x} \mult \Vs{y}$ for $\Vv{x} > 0$ and 
	$\Vv{x} \mult \Vs{w} \le \Vv{x} \mult \Vs{y}$ for $\Vv{x} < 0$, as $\Vs{w} \ge \Vs{y}$. 
	We can now compute the $\zero$ function from $\vecu^+$ and $\vecu^-$ as follows: 
	\[\zero(\Vv{x}) = 4 \mult (\Vs{u^+} - 1) \mult (\Vs{u^-} - 1)\]

	To compute the $\sign$ function, we define a new dimension $\Vbin{x} = \charfin{T}{\toutput}(\Vt{x})$. 
	Because the transformer does not need to compute anything if there are no output gates and 
	if there is an output gate there must also be a non-output gate, 
	without loss of generality there is a vector with $\Vbin{x} = 1$ and a vector with $\Vbin{x} = 0$. 
	We define another attention function 
	\[\attention_{\sign}(\vecx, \vecy) = \Vv{x} \mult \Vbin{y}\]
	If $\Vv{x} = 0$, it attends uniformly to all vectors, 
	if $\Vv{x} < 0$, it attends only to vectors with $\Vbin{y} = 0$ and 
	if $\Vv{x} > 0$, it attends only to vectors with $\Vbin{y} = 1$. 
	Therefore, if an attention head using $\attention_{\sign}$ and $\pooling_+$ 
	produces the vector $\vecu$ from $\vecx$, we get 
	\[\sign(\Vv{x}) = (1 - \zero(\Vv{x})) \mult (2 \mult \Vbin{u} - 1) .\]

	We add this term to $\activation_V$ and define a new input embedding
	$\finput(\vecx) = \big( \Vs{x}, \dots, \Vv{x}, 1,\allowbreak \Vs{x}^2, \Vi{x}^2, \charfin{T}{\toutput}(\Vt{x}) \big)$. 
	The resulting two layer transformer is 
	\vspace{-0.5\baselineskip}
	\begin{multline*}
		\Big(R, 9, R^5, \finput, 0, 
		\attention_E, \attention_V, 0, 0, 0, 0, \\
		\attention_V, \attention_{B(1)}, \attention_{B(2)}, \attention_{z+}, \attention_{z-}, \attention_{\sign}, 
		\pooling_+, \pooling_+, 0, 0, 0, 0, \\
		\pooling_+, \pooling_+, \pooling_+, \pooling_+, \pooling_+, \pooling_+, 
		\activation_E, \activation_V \Big).
	\end{multline*}
	As before, it can be concatenated with itself to simulate circuit families of constant depth which use $\sign$ gates. 
\end{proof}

\end{document}